  \providecommand\BibTeX{{%
   \normalfont B\kern-0.5em{\scshape i\kern-0.25em b}\kern-0.8em\TeX}}}
\pgfplotsset{compat=1.18}
\pgfplotsset{ log ticks with fixed point, }
\newcommand{\symweight}{\theta}
\newcommand{\assignment}{w}
\newif\ifdraft\draftfalse{}
\newif\iflater\latertrue{}
\newif\ifcharts\chartstrue{}
\newif\ifappendix\appendixtrue{}
\newif\ifafterdeadline\afterdeadlinetrue{}
\definecolor{dkred}{rgb}{0.7,0,0}
\definecolor{dkred}{rgb}{0,0,0.7}
\definecolor{myblue}{rgb}{0.3,0.5,1.0}
\definecolor{mydarkblue}{rgb}{0,0,1.0}
\definecolor{notered}{rgb}{0.85,0,0}
\definecolor{dkpurple}{HTML}{4e02eb}
\definecolor{dkgreen}{HTML}{006329}
\definecolor{dkorange}{HTML}{cc5500}
\newcommand{\entropycolor}{orange}
\newcommand{\trainedcolor}{blue!70!cyan} %
\newcommand{\initialcolor}{red!80!white} %
\newtheoremstyle{Definition}%
{}%
{}%
{}%
{}%
{\upshape}%
{:}%
{}%
{}%
\newtheorem{definition}{Definition}
\newtheorem{example}{Example}
\newsavebox{\measure@tikzpicture}
  \def\tikz@width{#1}%
\lstdefinestyle{mystyle}{
  basicstyle=\footnotesize\ttfamily,
}
\lstdefinestyle{rewrite}{
  basicstyle=\scriptsize\sffamily,
  gobble=4,
}
\newcommand{\lstframeset}{1pt}
\lstdefinelanguage{Julia}%
  {morekeywords={abstract,break,case,catch,const,continue,do,else,elseif,%
      end,export,false,for,function,immutable,import,importall,if,in,%
      macro,module,otherwise,quote,return,switch,true,try,type,typealias,%
      using,while},%
   sensitive=true,%
   alsoother={\$},%
   morecomment=[l]\#,%
   morecomment=[n]{\#=}{=\#},%
   morestring=[s]{"}{"},%
   morestring=[m]{'}{'},%
}[keywords,comments,strings]%
\footnotesize\textcolor{gray},
\definecolor{vlightgray}{rgb}{0.95, 0.95, 0.95}
\definecolor{codebg}{rgb}{0.97, 0.97, 0.97}
\tiny\color{gray},  %
\definecolor{codehl}{rgb}{0.75,0.87,0.95}
\definecolor{codehl2}{rgb}{1.0,0.87,0.6}
\newcommand{\hlBcolor}{rgb,255:red,229;green,166;blue,0}
\newcommand{\hlBtextcolor}{0.85,0.5,0.0}
\definecolor{codehl3}{rgb}{0.9,0.83,1.0}
\newcommand{\hlCcolor}{rgb,255:red,179;green,77;blue,153}
\newcommand{\hlCtextcolor}{0.7,0.3,0.6}
\newcommand{\M}[1]{\ensuremath{#1}}
\newcommand{\N}{\M{\mathbb{N}}}
\newcommand{\uop}[2]{\operatorname{#1}\ #2}
\newcommand{\kind}[1]{\textit{#1}}
\newcommand{\defkind}[4]{
  \kind{#1} \qquad & #2 & #3\quad & #4 \\
}
\newcommand{\defkindnobr}[4]{
  \kind{#1} \qquad & #2 & #3\quad & #4
}
\newcommand{\assign}{::=}
\newcommand{\variantor}{\mid}
\newcommand{\etna}{\textsc{Etna}\xspace}
\newcommand{\rocq}{\textsc{Rocq}\xspace}
\newcommand{\letin}[2]{\mathrm{let\ }#1 = #2\mathrm{\ in}}
\newcommand{\dif}[3]{\mathrm{if\ }#1\mathrm{\ then\ }#2\mathrm{\ else\ }#3}
\newcommand{\dice}{\textsc{Dice}\xspace}
\newcommand{\ld}{\textsc{Loaded Dice}\xspace}
\newcommand{\Ourframework}{Our approach\xspace}
\newcommand{\ourframework}{our approach\xspace}
\newcommand{\quickcheck}{QuickCheck\xspace}
\newcommand{\quickchick}{QuickChick\xspace}
\newcommand{\SpecEntropy}{Specification Entropy\xspace}
\newcommand{\specentropy}{specification entropy\xspace}
\newcommand{\mono}[1]{\texttt{#1}}
\newcommand{\mathdefeq}{\vcentcolon=}
\newcommand{\defeq}{\triangleq}
\tikzstyle{nnf}=[
\tikzstyle{extnode}=[
\tikzstyle{bddroot}=[
\tikzstyle{leafnode}=[
\tikzstyle{constnode}=[
\tikzstyle{label}=[
\tikzstyle{acarrow}=[
\tikzstyle{bnarrow}=[
\tikzstyle{bayesnet}=[
\tikzstyle{bnnode}=[
\tikzstyle{cpt}=[
\tikzstyle{graph}=[
\tikzstyle{node}=[
\tikzstyle{bdd}=[
\tikzstyle{bddnode}=[
\tikzstyle{highedge}=[
\tikzstyle{lowedge}=[
\tikzstyle{bddterminal}=[
\newcommand{\currentsidemargin}{%
  \ifodd\value{page}%
    \oddsidemargin%
  \else%
    \evensidemargin%
  \fi%
}
\newlength{\whatsleft}
\newcommand{\eightgraph}[9]{
    \begin{subfigure}[b]{#8}
        \begin{tikzpicture}
            \begin{axis}[
            height=2cm,width=\textwidth-18pt,
            grid=major,
            xlabel near ticks,
            ylabel near ticks,
            scale only axis,
            xticklabel style={
                /pgf/number format/fixed,
                /pgf/number format/1000 sep={,\!},
                /pgf/number format/precision=0,
                font=\tiny
            },
            xlabel style = {font=\scriptsize},
            ylabel style = {font=\scriptsize},
            title style = {font=\small},
            title={#1 #2},
            xlabel={#6},
            ylabel={},
            xticklabel style = {font=\tiny,yshift=2pt},
            yticklabel style = {font=\tiny,xshift=2pt},
            xminorticks=false,
            yminorticks=false,
            scaled x ticks=false,
            ybar,
            ymin=0,
            bar width=#7,
            enlarge x limits=#9,
            xtick=data,
                ]

                \addplot[ybar, bar shift=-#7-0.8, fill=\initialcolor!70, draw=\initialcolor!80, area legend] table[x={val}, y={probability}] {#3}; %
                \addplot[ybar, fill=\trainedcolor!70, draw=\trainedcolor!80, area legend] table [x={val}, y={probability}] {#4}; %
                \addplot[ybar, bar shift=#7+0.8, fill=black!30, draw=black, area legend, pattern=north east lines] table [x={val}, y={probability}] {#5}; %

            \end{axis}
        \end{tikzpicture}
    \end{subfigure}
}
\begin{document}
	
\title{Tuning Random Generators}
\subtitle{Property-Based Testing as Probabilistic Programming}

\author{Ryan Tjoa}
\orcid{0009-0003-0731-5398}
\affiliation{%
	\institution{University of Washington}
	\city{Seattle}
	\country{USA}
}
\email{rtjoa@cs.washington.edu}

\author{Poorva Garg}
\orcid{0000-0003-4753-3974}
\affiliation{%
	\institution{University of California, Los Angeles}
	\country{USA}
}
\email{poorvagarg@cs.ucla.edu}

\author{Harrison Goldstein}
\orcid{0000-0001-9631-1169}
\affiliation{%
	\institution{University of Maryland}
	\city{College Park}
	\country{USA}
}
\email{me@harrisongoldste.in}

\author{Todd Millstein}
\orcid{0000-0002-2031-1514}
\affiliation{%
	\institution{University of California, Los Angeles}
	\country{USA}
}
\email{todd@cs.ucla.edu}

\author{Benjamin C. Pierce}
\orcid{0000-0001-7839-1636}
\affiliation{%
	\institution{University of Pennsylvania}
	\city{Philadelphia}
	\country{USA}
}
\email{bcpierce@seas.upenn.edu}

\author{Guy Van den Broeck}
\orcid{0000-0003-3434-2503}
\affiliation{%
	\institution{University of California, Los Angeles}
	\country{USA}
}
\email{guyvdb@cs.ucla.edu}

\begin{abstract}
  Property-based testing validates software against an executable specification by
evaluating it on randomly generated inputs.
The standard way that PBT users generate test inputs is via \textit{generators} that describe how to sample test inputs through random choices.
To achieve a good distribution over test inputs, users must \textit{tune} their generators,
i.e., decide on the weights of these individual random choices.
Unfortunately, it is very difficult to understand how to choose individual generator weights in order to achieve a desired distribution,
so today this process is tedious and limits the distributions that can be practically achieved.

In this paper, we develop techniques for the automatic and offline tuning of generators. Given a generator with undetermined \emph{symbolic weights} and an \emph{objective function}, our approach automatically learns values for these weights that optimize for the objective. We describe useful objective functions that allow users to (1) target desired distributions and (2) improve the diversity and validity of their test cases. We have implemented our approach in a novel discrete probabilistic programming system, \ld, that supports differentiation and parameter learning, and use it as a language for generators. We empirically demonstrate that our approach is effective at optimizing generator distributions according to the specified objective functions. We also perform a thorough evaluation on PBT benchmarks, demonstrating that, when automatically tuned for diversity and validity, the generators exhibit
a 3.1--7.4$\times$ speedup in bug finding.

\end{abstract}

\begin{CCSXML}
<ccs2012>
   <concept>
       <concept_id>10011007.10011074.10011099.10011102.10011103</concept_id>
       <concept_desc>Software and its engineering~Software testing and debugging</concept_desc>
       <concept_significance>300</concept_significance>
       </concept>
   <concept>
       <concept_id>10002950.10003648.10003662</concept_id>
       <concept_desc>Mathematics of computing~Probabilistic inference problems</concept_desc>
       <concept_significance>300</concept_significance>
       </concept>
   <concept>
       <concept_id>10002950.10003714.10003715.10003748</concept_id>
       <concept_desc>Mathematics of computing~Automatic differentiation</concept_desc>
       <concept_significance>300</concept_significance>
       </concept>
 </ccs2012>
\end{CCSXML}

\ccsdesc[300]{Software and its engineering~Software testing and debugging}
\ccsdesc[300]{Mathematics of computing~Probabilistic inference problems}
\ccsdesc[300]{Mathematics of computing~Automatic differentiation}

\maketitle

\section{Introduction}

Property-based testing (PBT) is a powerful~\cite{ artsTestingAUTOSARSoftware2015,artsTestingTelecomsSoftware2006,hughesMysteriesDropBoxPropertyBased2016}
and widely-studied~\cite{ hughesExperiencesQuickCheckTesting2016,hughesHowSpecifyIt2019,goldsteinParsingRandomness2022}
software testing technique that validates a system under test with
respect to an executable specification by evaluating it on many
randomly generated inputs. For example, when testing a
sorting function, a user might
write the property
{\small
\[ \forall \mono{l. } \mono{isSorted (sort l)}, \]}
which specifies that the result of sorting a list should be sorted.
To check this property, the PBT framework generates hundreds or
thousands of inputs 
to the property (lists \mono{l}) and checks the statement
with respect to each one. Since testing performance is entirely dependent on
the distribution of test inputs, a great deal of PBT research has focused
on how to quickly 
generate inputs that find more bugs, faster~\cite{claessenGeneratingConstrainedRandom2015,goldsteinParsingRandomness2022}.

The standard way that PBT users generate test inputs is via
{\em generators} --- programs that describe how to sample
test inputs from some random distribution. There has been extensive
research both on domain-specific languages for manually constructing
generators~\cite{Luck} and on methods for automatically deriving generators from data type
definitions~\cite{generic-random} or inductive
relations~\cite{leo-good-generators}.

But despite all this flexibility and automation, a key challenge remains:
\emph{generator tuning}. In order to achieve a good distribution of test
inputs---one where ``interesting'' inputs appear often---the
programmer has to manually decide on the weights of the individual 
random choices that are made as the generator executes.

For example, the
following generator is the one a developer might try writing down when testing
the above \mono{sort} property.
We write it in our \ld\footnote{
\ld is available at \url{https://github.com/Tractables/Alea.jl/tree/loaded-dice}.
} probabilistic programming language (\autoref{sec:loaded-dice}), which is an embedded domain-specific language in Julia;
\mono{@match} is a Julia macro defined by \ld to implement pattern matching.

\begin{lstlisting}
genList(sz) = # generates lists up to length sz
  @match sz (
    0 X$\to$X Nil(),
    X$\label{line:one-of}$XS(sz') X$\to$X oneOf [ Nil(), Cons(genNat(), genList(sz')) ])
\end{lstlisting}
If a developer inspects the test cases produced by this generator,
they will quickly
notice an obvious problem --- \(50\%\) of the generated test cases will be
empty lists! This is because the \texttt{oneOf} combinator in the generator makes a uniform random choice between
the two constructors for the data type.  Half of the time, it chooses
\texttt{Nil}, and the other half of the time, it chooses
\texttt{Cons}. Note that this means not only are half the
  lists empty, but half of the rest are length 1, and so on.

Even if it is obvious to the developer that this distribution is a poor choice
for testing,
it may not be obvious how to improve the situation. One could use
\texttt{freq}, a replacement for \texttt{oneOf} that allows the user to manually
add weights to the random choice. For example, the following use of \texttt{freq} chooses \mono{Cons} two-thirds of the time.
{\small
\[\mono{freq [1 $\Rightarrow$ Nil(), 2 $\Rightarrow$ Cons(genNat(), genList(sz'))]}\]}

But which weights should the programmer use?
In general, and especially as generators get more complicated, it can be
a significant challenge to understand how changing weights changes the
final distribution.
With the above, the distribution of list lengths is roughly\footnote{It differs from the geometric distribution because it is truncated at the initial \mono{sz}, as \mono{Nil} is always chosen when \mono{sz} is 0.} the geometric distribution with success probability two-thirds.
Reasoning about the distributions of more complex generators, and how they depend on the weights, quickly increases in difficulty.
Indeed, recent work on PBT
usability~\cite{tyche} cited tuning as a source
of ``mental strain'' for developers who felt like they needed to ``study
probability and statistics'' to understand how to tune a
generator to suit their needs.

In this paper, we address this issue by providing developers with techniques for
{\em automatically} tuning generators. Concretely,
users can write down generators
with {\em symbolic weights} that are not yet determined, then specify an
{\em objective function} that the weights should attempt to optimize.
We present an offline approach to automatically learn values for these
weights to optimize for a given objective function.

\Ourframework is flexible enough to handle a wide variety of objective
functions.
If the developer has an intuition about the distribution they want (e.g.,
that the distribution of lengths of generated lists should be uniform)
they can simply
        optimize the generator to try to match that distribution.
If they don't know the precise distribution they're after, they can
instead favor diversity of test cases by
optimizing for {\em entropy}~\cite{shannon}, a standard metric for diversity.
Finally, if the developer has a notion of ``validity'' that they want to
        maintain, they can combine entropy with adherence to some
        specification of validity.

\Ourframework automatically tunes PBT generators to optimize
objective functions by expressing the generators as programs in an extension of
\dice~\cite{dice}, a discrete probabilistic programming language (PPL).
PPLs and generator
languages both deal with randomness, but, for our purposes, \dice has a significant advantage:
it can perform exact probabilistic inference, computing a representation of
the full distribution of a given generator.
Furthermore, the inference strategy in \dice is
differentiable, so we can use
gradient descent to optimize symbolic generator weights for a given
objective. We design and implement \ld, an extension of \dice that supports
differentiation and parameter learning, and use it as a language for
generators.

To make tuning feasible, we address two key performance challenges.
First, probabilistic inference for discrete PPLs is \#P-hard
in general~\cite{probwmc},
which in turn makes computing gradients \#P-hard as well. We address this problem by
choosing our PPL carefully: \dice compiles
to binary decision diagrams (BDDs) that
naturally exploit program structure to scale probabilistic inference.
Since the
computation of gradients in \ld happens on the same BDDs, it can leverage the
same scaling benefits.
The second performance challenge arises from the fact that the na\"ive way to
compute our proposed objective functions requires enumerating the whole space of
possible test cases. This is infeasible, so we adapt a scalable gradient
estimation technique, REINFORCE~\cite{REINFORCE}, to our context of generator tuning.
At a high level, REINFORCE allows us to avoid this large enumeration
by replacing it with sampling.

With these elements in place, we present multiple examples that demonstrate the
effectiveness of our approach in steering the distributions of the generators.
We also perform a thorough evaluation on PBT benchmarks, demonstrating
that, when tuned for diversity and validity, the generators lead to a
3.1--7.4$\times$ speedup in bug finding.

Following a high-level overview in
\autoref{sec:overview}, we offer the following contributions:
\begin{itemize}
  \item We describe a space of generator-independent objective functions that can target a specific distribution or increase the diversity and validity of the generator (\autoref{sec:objectives}).
  \item We describe the design and implementation of \ld, a PPL that allows weights to be learned for these objectives by extending \dice with automatic differentiation (\autoref{sec:loaded-dice}).
  \item We show techniques for constructing generators more amenable to tuning, and how to derive such generators from inductive type definitions (\autoref{sec:constructing-tunable-generators}).
  \item We present training techniques to achieve these objectives in practice. In particular, we adapt REINFORCE, a gradient estimation technique, to the context of generator tuning, in order to efficiently optimize for entropy-based objectives (\autoref{sec:training-techniques}).
  \item To evaluate our approach, we use \ld to tune a diverse collection of type-based generators for validity and diversity and to tune a handwritten STLC generator for a particular distribution, improving the speed at which they find bugs on existing benchmarks (\autoref{sec:etna}).
\end{itemize}

\section{Overview} \label{sec:overview}

In this section, we overview \ourframework, focusing in particular on
how it can benefit PBT users.
\vspace{-1pt}

\subsection{The Basics of Generator Tuning}\label{sec:overview-basics}

\begin{figure}[b]
\begin{lstlisting}
@type Color = R() | B()
@type Tree = Leaf() | Branch(Color, Tree, Nat, Nat, Tree)

genColor() = freq([X$\theta_\mono{red}$X X$\Rightarrow$X R(), 1 - X$\theta_\mono{red}$X X$\Rightarrow$X B()])

genTree(size) =
  @match size (
    O       X$\to$X Leaf(),
    S(n) X$\to$X (
      w = @match size ( 1 X$\Rightarrow$X X$\theta_1$X, 2 X$\Rightarrow$X X$\theta_2$X, 3 X$\Rightarrow$X X$\theta_3$X, 4 X$\Rightarrow$X X$\theta_4$X, 5 X$\Rightarrow$X X$\theta_5$X );
      freq([w X$\Rightarrow$X Leaf(),
            1-w X$\Rightarrow$X  Branch(genColor(), genTree(n), genNat(), genNat(), genTree(n))])))

G = genTree(5)
\end{lstlisting}
\vspace{-4pt}
\caption{A generator of (not necessarily valid) red-black trees, using
symbolic weights.
The macros \mono{@type} and \mono{@match}, provided by the \ld embedding in Julia, implement algebraic datatypes and pattern matching.
The match expression computing \mono{w} allows the weights to depend on \mono{size}, as described in \autoref{sec:splitting}.
}\label{fig:gen-tree}
\end{figure}

Since the beginning of PBT, in
QuickCheck~\cite{claessenQuickCheckLightweightTool2000}, generators have been a
core part of the PBT process. PBT frameworks provide a domain-specific language
(DSL) for expressing and combining generators, and programmers can use that language
to design arbitrarily complicated distributions of test inputs for their programs.

While the power of generator DSLs is generally a significant benefit, the
complexity of PBT generators leads to some important challenges. The key
challenge we focus on in this paper is {\em tuning}, which is the process of
choosing the weights with which different random choices in the generator are
made.

To illustrate tuning and demonstrate why it is difficult,
we start with a toy example. 
Consider this
\ld generator of characters from \mono{`a'}--\mono{`e'}:
\enlargethispage{2pt}
\begin{lstlisting}[breaklines=true]
G = freq([
  X$\theta_1$X X$\Rightarrow$X freq([X$\theta_2$X X$\Rightarrow$X 'a', X$\theta_3$X X$\Rightarrow$X 'b', X$\theta_4$X X$\Rightarrow$X 'c']),
  X$\theta_5$X X$\Rightarrow$X freq([X$\theta_6$X X$\Rightarrow$X 'c', X$\theta_7$X X$\Rightarrow$X 'd', X$\theta_8$X X$\Rightarrow$X 'e'])])
\end{lstlisting}
The generator \mono{G} uses the \mono{freq} combinator to make
weighted random choices between different options; each $\theta$ is a
placeholder for a number that decides the relative weight of that particular
choice. For example, if $\theta_2$, $\theta_3$, and $\theta_4$ were all $1$, the
\mono{freq} containing them would make a uniform choice. If $\theta_2$ were
changed to $2$, the value \mono{`a'} would be chosen twice as often as
\mono{`b'} or \mono{`c'}.

Now, suppose we want to ensure that \mono{G} has a uniform distribution over the
five characters --- that is, that each is sampled \(20\%\) of the time. We
encourage readers to take a moment to try to come up with values for
$\theta_1$--$\theta_8$ that produce the appropriate distribution. Even for this
very simple example, it is not totally obvious!

\Ourframework entirely automates this reasoning. The user can simply write a
target distribution and then ask that the generator be trained to match that
distribution:
\begin{lstlisting}
target = ['a' X$\Rightarrow$X 0.2, 'b' X$\Rightarrow$X 0.2, 'c' X$\Rightarrow$X 0.2, 'd' X$\Rightarrow$X 0.2, 'e' X$\Rightarrow$X 0.2]
objective = -kl_divergence(target, G)
\end{lstlisting}
We discuss KL divergence later in detail; for now, read line 2 as ``pick weights
in \mono{G} such that the final distribution is as close as possible to
\mono{target}.'' Given these inputs, our approach automatically learns weights proportional to
$\{\theta_1\!\mapsto\!1,\; \theta_2\!\mapsto\!2,\; \theta_3\!\mapsto\!2,\; \theta_4\!\mapsto\!1,\; \theta_5\!\mapsto\!1,\; \theta_6\!\mapsto\!1,\; \theta_7\!\mapsto\!2,\; \theta_8\!\mapsto\!2\}$,
which achieves the desired distribution.

\subsection{Approximating Known Distributions}\label{sec:overview-approx}

Next, we move on to a more realistic example, continuing to explore how
\ourframework allows developers to tune generators according to a concrete
desired distribution.

Consider the generator in \autoref{fig:gen-tree}, which generates
random color-labeled binary trees (some subset of these trees will be valid
red-black trees~\cite{red-black-trees}, but this generator does not
ensure that invariant). The function \mono{genTree} takes a maximum tree
\mono{size} and then generates trees up to that size.  If \texttt{size} is
non-zero, \mono{genTree} makes a random choice: with probability
\mono{w} it generates a leaf, and with probability \mono{1 - w} it generates a color, key, and value for an internal
node, then recurses to generate the two child trees with reduced \texttt{size}.
If \texttt{size} is zero, then it always produces a leaf.

Starting from this generator, a developer might have some ideas for what they
would like the distribution of trees to look like. For example, they may tune
the generator to produce relatively few very small trees (height 1 and 2) and
proportionally more large trees (height 4 and 5).
They can specify their desired distribution over heights by using a \mono{height} function and \mono{kl_divergence}:
\begin{lstlisting}
target = [1 X$\Rightarrow$X 0.1, 2 X$\Rightarrow$X 0.1, 3 X$\Rightarrow$X 0.2, 4 X$\Rightarrow$X 0.3, 5 X$\Rightarrow$X 0.3]
objective = -kl_divergence(target, height(G))
\end{lstlisting}
This is similar to the
example above, but now both the generator and the objective are significantly
more realistic: the generator is modeled after one that appears often in the PBT
literature, and the objective adheres to a common PBT principle that larger test
inputs often find more bugs.\footnote{As observed by \citet{etna} and shown in \autoref{sec:etna}, it can also be beneficial to tune for \textit{smaller} inputs.}

\begin{figure}[!t]
        \centering
	\begin{subfigure}[t]{0.9\linewidth}
		\centering
            \ifcharts
		\begin{tikzpicture} 
			\begin{axis}[%
				hide axis,
				xmin=10,
				xmax=50,
				ymin=0,
				ymax=0.4,
				legend columns=-1,
				legend style={draw=none,legend cell align=left,column sep=1ex}
				]
				\addlegendimage{ybar, fill=\initialcolor!70, draw=\initialcolor!80, area legend}
				\addlegendentry{\small Initial};
				\addlegendimage{ybar, fill=\trainedcolor!70, draw=\trainedcolor!80, area legend}
				\addlegendentry{\small Tuned};
				\addlegendimage{ybar, fill=black!30, draw=black, pattern=north east lines, area legend}
				\addlegendentry{\small Objective};
			\end{axis}
		\end{tikzpicture}
            \fi
	\end{subfigure}

	\newcommand{\treewidth}{0.23\textwidth}
	\newcommand{\treebarwidth}{2.5pt}
	\newcommand{\treeenlargex}{0.15}
	\newcommand{\nonbespokewidth}{0.24\textwidth}
	\newcommand{\nonbespokebarwidth}{2.5pt}
	\newcommand{\nonbespokeenlargex}{0.15}
	\newcommand{\bespokewidth}{0.4\textwidth}
	\newcommand{\bespokebarwidth}{2.3pt}
	\newcommand{\bespokeenlargex}{0.06}
        \begin{minipage}{\textwidth}
        \ifcharts
            \centering
            \raisebox{0.76cm}{%
                \begin{tikzpicture}
    \node[rotate=90, align=center] at (0,0) {\small Uniform\\\scriptsize Probability};
                \end{tikzpicture}%
            }%
			\hspace{-8pt}
            \eightgraph{RBT Type-Based}{}{\bsttbinitdepth}{\rbttbunifdepth}{\treeuniftarget}{}{\treebarwidth}{\treewidth}{\treeenlargex}%
			\hspace{-10pt}
            \eightgraph{STLC Type-Based}{}{\stlctbinitdepth}{\stlctbunifdepth}{\stlctbuniftarget}{}{\nonbespokebarwidth}{\nonbespokewidth}{\nonbespokeenlargex}%
			\hspace{-10pt}
            \eightgraph{STLC Bespoke}{}{\stlcbespokeinitdepth}{\stlcbespokeunifdepth}{\stlcuniftarget}{}{\bespokebarwidth}{\bespokewidth}{\bespokeenlargex}%
        \fi
        \end{minipage}
        \begin{minipage}{\textwidth}
        \ifcharts
            \centering
            \raisebox{1.24cm}{%
                \begin{tikzpicture}
    \node[rotate=90, align=center] at (0,0) {\small Linear\\\scriptsize Probability};
                \end{tikzpicture}%
            }%
			\hspace{-8pt}
            \eightgraph{}{}{\bsttbinitdepth}{\rbttblindepth}{\treelineartarget}{Tree Height}{\treebarwidth}{\treewidth}{\treeenlargex}%
			\hspace{-10pt}
            \eightgraph{}{}{\stlctbinitdepth}{\stlctblindepth}{\stlctblineartarget}{AST Height}{\nonbespokebarwidth}{\nonbespokewidth}{\nonbespokeenlargex}%
			\hspace{-10pt}
            \eightgraph{}{}{\stlcbespokeinitdepth}{\stlcbespokelindepth}{\stlclineartarget}{AST Height}{\bespokebarwidth}{\bespokewidth}{\bespokeenlargex}%
        \fi
        \end{minipage}
    \vspace{-5pt}
		\pgfplotsset{every axis/.style={}}
        \caption{Tuning the distributions of heights of generated values to be uniform or linear for several generators. The top row trains the generators to have a uniform distribution over heights, and the bottom row trains the generators to have a linear distribution over heights.\protect\footnotemark}
        \label{fig:height-tunings}
    \vspace{-5pt}
\end{figure}

\footnotetext{
The tuned distributions of the AST heights in the STLC bespoke generator do not  
exactly match the target distribution, but they are closer to their objectives:
tuning improves KL divergence from 0.44 to 0.22 for the uniform target distribution, and from 0.92 to 0.27 for the linear target distribution.}

\autoref{fig:height-tunings} demonstrates this process on a wider range of
examples. 
We show the results of tuning three generators: the above generator for
color-labeled binary trees, a generator for terms in the simply-typed lambda calculus
(STLC) that may or may not be well-typed, and a ``bespoke'' handwritten generator for well-typed STLC terms (generator shown in Appendix B).
For each, we tune the distributions to have either
a uniform or a linear relationship between data structure height and sampling
frequency.
The charts show that the tuned generators (in blue) match the
target distribution (in gray) much better than the untuned\footnote{Throughout this paper, an ``untuned'' generator is one in which each random choice is uniformly distributed.} versions (in red).

\subsection{Under the Hood}

Sections \ref{sec:objectives}, \ref{sec:loaded-dice}, and
\ref{sec:training-techniques} discuss the technical details of \ourframework and
its implementation in significant detail; here we simply give a high level
picture.

Our key observation is that by implementing generators in a probabilistic programming language, we get easy access to algorithms that can be used to automate tuning. In particular, we choose \dice~\cite{dice} as our starting point because it
provides scalable procedures for {\em exact probabilistic inference} ---
computing a closed-form representation of the whole generator distribution.
This means that if we implement a generator in \dice, we can compute precisely
how well that generator matches a distribution requested by the user.

\begin{figure}[!t]
	\centering
	\begin{tikzpicture}[
		node distance=3.8cm,
		auto,
		thick,
		box/.style={rectangle, draw, rounded corners=3pt, minimum width=1.5cm, minimum height=0.6cm, align=center, font=\footnotesize},
		arrow/.style={->, >=stealth, thick}
	]
		\node[box] (tuned) {Tuned\\Weights};
		\node[box, left of=tuned] (prob) {Symbolic\\Loss};
		\node[box, xshift=-0.2cm, left of=prob,yshift=0.5cm] (bdd) {BDD};
		\node[box, xshift=0.2cm,left of=bdd] (ld) {\ld};
		\node[box, xshift=-0.2cm, left of=prob, yshift=-0.5cm] (objective) {Probabilistic Objective};
		
		\draw[arrow] (ld) -- node[above] {\footnotesize Compilation} (bdd);
		\draw[arrow] (bdd) -- node[above] {\footnotesize\;\;Model Counting} (prob);
		\draw[arrow] (prob) -- node[above] {\footnotesize Gradient} (tuned);
		\draw[arrow] (prob) -- node[below] {\footnotesize Descent} (tuned);
		\draw[arrow] (objective) -- (prob);

	\end{tikzpicture}
	\caption{Tuning Generators in \ld.}
  \label{fig:teaser}
\end{figure}

We extend \dice to a probabilistic programming system called \ld with two significant additions.
First, in \dice all weights must be concrete numbers; \ld instead has {\em symbolic weights} (the $\theta$s in the generators above),
which allow the programmer to choose points in the generator where weights should
be learned automatically. Second, \ld extends \dice's inference algorithm to
compute {\em gradients}; we can compute not only the distribution of a
generator, but also how the symbolic weights should be changed to
maximize a given objective function.  

With these features in place, \autoref{fig:teaser} shows the workflow of our approach. A generator in \ld with symbolic weights is first compiled to a binary decision diagram (BDD) that represents all
possible executions of the program. The benefit of this representation is that we can then perform exact probabilistic inference on the program as a linear pass over the BDD~\cite{dice}. Specifically, 
the procedure of \emph{weighted model counting} on the BDD generates expressions over the symbolic weights that correspond to the probability distribution of the generator.
These expressions are then plugged into the user's desired objective function, and the result is a differentiable expression over the symbolic weights in the generator that can be optimized using gradient descent. The system repeatedly
computes the gradient of the generator with respect to the objective function and
then nudges the generator weights in the appropriate direction. In the end, the
training process is likely to stabilize on a choice of weights that gets close to the desired distribution~\cite{BlumHopcroftKannan2020}.

\subsection{More General Tuning Objectives}\label{sec:overview-general-objectives}

Matching developer-specified distributions is certainly useful, but in
real-world settings a developer might not actually know what distribution will
be best for testing. To address this situation, we have identified two 
generally useful properties of a distribution of test cases that we can use as
generic tuning objectives.

One natural objective that a developer may want for their generator is
maximizing the diversity or {\em entropy} of the generated values. For
generators producing unconstrained values, this can be an ideal shortcut to a
balanced distribution. For example, maximizing the entropy of the
\mono{`a'}--\mono{`e'} generator from \autoref{sec:overview-basics} gives the same
weights as the more explicitly defined uniform distribution.

But often, values are constrained by a {\em validity predicate} --- for example,
the color-labeled tree example from \autoref{sec:overview-approx} may be
expected to produce valid red-black trees for the purposes of testing functions
that require red-black tree validity as an invariant. For these situations,
maximizing entropy alone is not sufficient. We therefore define a notion of {\em
specification entropy}, which attempts to simultaneously optimize entropy and
validity. \autoref{fig:stlc-unique-types} shows the results
of tuning an STLC generator to increase the entropy
of the types of generated terms,
and simultaneously increase the
likelihood of well-typedness.
\autoref{fig:stlc-unique-dist-tuning} visualizes
how the distribution over types changes over time as we
tune the weights;
\autoref{fig:stlc-unique-types-cumulative} contrasts the initial generator with the tuned version by comparing the number of unique types each generates as the number of samples increases.
The trained version produces terms with far more
diverse and interesting types.
At the beginning of tuning, ill-typed terms and terms of type bool
comprise 66\% and 25\% of samples, respectively. After tuning,
neither any single type nor ill-typed terms comprise more than 0.5\% of samples.

\begin{figure}[!t]
    \begin{subfigure}[b]{0.55\textwidth}
      \ifcharts
    \hspace{-15pt} 
          \includegraphics[width=0.96\textwidth]{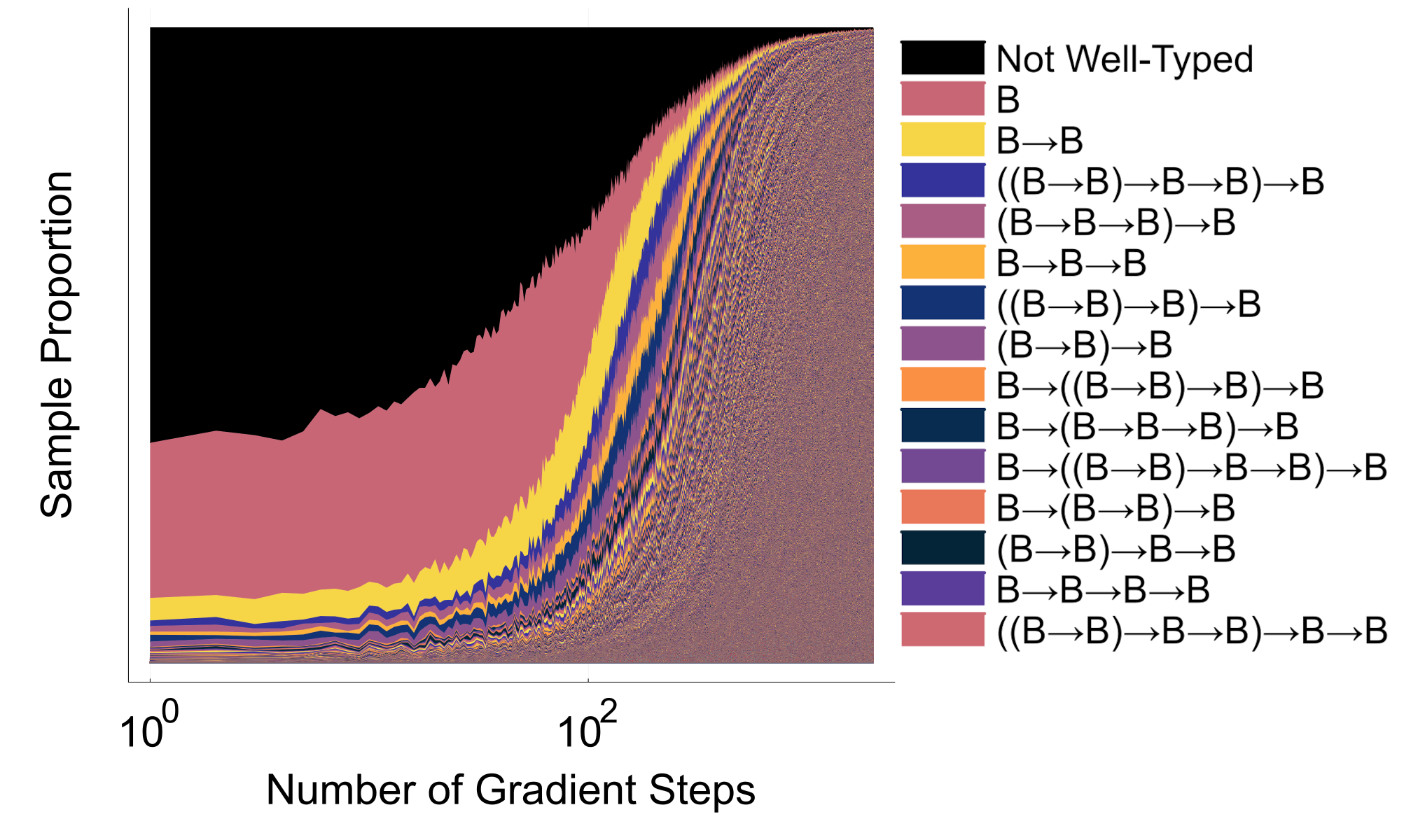}
      \fi
      \vspace{-1pt}
      \caption{
  A visualization of 
  how the distribution over\\types changes as we
  tune the weights. The x-axis\\uses log scaling.
    }
    \label{fig:stlc-unique-dist-tuning}
    \end{subfigure}%
    \hspace{-10pt}
    \begin{subfigure}[b]{0.39\textwidth}
    \hspace{-5pt}
              \ifcharts
          \begin{tikzpicture}
            \begin{axis}[
                  height=3.2cm,width=\textwidth-45pt,
                  grid=major,
                  xlabel near ticks,
                  ylabel near ticks,
                  scale only axis,
                  xtick={10, 1000, 100000},
                  xticklabel style={font=\scriptsize},
                  ytick={10,100,350,1000,9498},
                  xmode=log,
                  ymode=log,
                  xlabel style = {font=\footnotesize},
                  ylabel style = {font=\footnotesize},
                  xlabel={Number of Samples},
                  ylabel={\# of Unique Types},
                  yticklabel style = {font=\scriptsize},
                  xminorticks=false,
                  yminorticks=false,
                  legend style={at={(0.98,0.02)}, anchor=south east, font=\scriptsize},
                  legend entries={Initial,Tuned}
              ]
              \addplot[mark=none, thick, \initialcolor, mark size = 1.0pt] table[x index={0}, y index={1}] {\cumulativeuniqdata}; %
              \addplot[mark=none, thick, \trainedcolor, mark size = 1.0pt] table [x index={0}, y index={2}] {\cumulativeuniqdata}; %
            \end{axis}
          \end{tikzpicture}
              \fi
      \caption{Cumulative unique types throughout sampling, before and after tuning.
      Both axes use log scaling.
    }
      \label{fig:stlc-unique-types-cumulative}
    \end{subfigure}
    \caption{Results for tuning an STLC generator for unique types using \specentropy.
  For brevity, the legend of (a) shows only the most common types and abbreviates ``Bool'' as ``B.''
    }
    \label{fig:stlc-unique-types}
  \end{figure}

\section{Objective Functions} \label{sec:objectives}

Users of property-based testing sometimes have intuitions about what specific generator distributions are desirable for their use case~\cite{goldsteinPropertyBasedTestingPractice2024}.
This section describes how these intuitions can be expressed as objective functions for automated generator tuning in \ourframework. We first provide some preliminaries,
and then \autoref{sec:objectives-target} describes how a user can specify an objective function to tune for a target distribution. Then, \autoref{sec:objectives-spec-entropy} describes an objective function to improve the diversity and validity of their test cases.

\subsection{Preliminaries and Notation}\label{sec:objectives-setup}

The random choices made by a generator induce a particular distribution over
the test cases it can produce.
Let $G$ denote a generator for test cases of type $T$ with $n \in \N$ symbolic weights.
We represent an \textit{assignment} to its weights as \(\assignment \in [0, 1]^n\).
Then,
we denote the probability distribution induced by $G$ instantiated with those weights as \(p_{G, \assignment}\).

Now, for automated generator tuning, we need a measure of how good the generator distribution is. For this purpose, we define an \textit{objective function} as follows:

\begin{definition}[objective function]
	Given a generator \(G\) with \(n\) symbolic weights, an objective function \(f: [0, 1]^n \rightarrow \mathbb{R}\) is defined such that for two assignments of weights, \(w\) and \(w'\), if \(f(\assignment) > f(\assignment')\) then the user prefers distribution \(p_{G, \assignment}\) over \(p_{G, \assignment'}\).
\end{definition}

\begin{example}
\label{example:max-prob-b}
If $G$ generates characters \mono{`a'-`e'}, the objective function
$f(\assignment) = p_{G,\assignment}(\mono{`b'})$ simply maximizes the probability of generating \mono{`b'}.
\end{example}

We provide two useful families of objective functions, which we specify below. We introduce the target objective function to tune for a particular distribution and the specification entropy objective function to improve the diversity and validity of generated test cases.

\subsection{Objective Function to Target a Distribution}\label{sec:objectives-target}

As stated in \autoref{sec:overview}, PBT users sometimes desire a particular distribution over a feature of their generated test cases (say, one might want a generator for RBTs to produce a uniform distribution over tree heights).
In fact, there are numerous tools that record the generator distribution to make it easier for the user to visualize~\cite{stewartTestQuickCheckHackage2024,tyche, PierceSF4},
but the user still has to manually update and reason about the weights to adjust their distribution.
Through automated generator tuning, our approach optimizes weights such that the generator distribution approaches the user's desired distribution.

To define an objective function for this task, we first capture how a generator can induce a distribution over a feature of
its generated test cases through the following definition:

\begin{definition}[push-forward of generator distribution]
	The push-forward of a generator distribution \(p_{G, \assignment}\) over type $T$ through a function \(g: T \rightarrow T'\) is the probability distribution \(p_{G, \assignment, g}\) over type \(T'\) such that
	\[\forall t' \in T', \;\;\; p_{G, \assignment, g}(t') = \sum_{t \in T, g(t) = t'} p_{G, \assignment} (t).\]
\end{definition}

\begin{example}
	Let $G$ be a generator over lists such that 
	\(p_{G,\assignment}([1, 2]) = 0.7\) and \(p_{G,\assignment}([2, 3]) = 0.3\) and
	$g$ be the length function for lists. Then, by the above definition, \(p_{G,\assignment,g}(2) = 1\).
\end{example}

Now, the objective function that aims for a particular distribution should minimize the distance between the generator distribution and the target distribution. To capture this notion, we use KL divergence~\cite{kldivergence} as the measure of how much one probability distribution differs from the other and define the \textit{target objective function} as follows:

\begin{definition}[target objective function]
	Given a generator \(G\) for test cases of type $T$, a function \(g: T \rightarrow T'\), and a target distribution \(\tilde{p}\) over values of type \(T'\), the target objective function is defined as the negative KL divergence between the target distribution and the push forward of the generator distribution through \(g\).
	
	\begin{equation*}\mathit{Target}(\assignment) \mathdefeq - \mathit{KLD}(\tilde{p}, p_{G, \assignment, g}) = - \sum_{t' \in T'} \tilde{p}(t') \log{\frac{\tilde{p}(t')}{p_{G, \assignment, g}(t')}}\end{equation*}
\end{definition}

\begin{example}
	Let \(G\) be a generator for red-black trees with symbolic weights \(\assignment\). Let \(g\) be a function that takes as input a red-black tree and outputs its height. Let the user-specified target distribution \(\tilde{p}\) be defined over the tree height as \(\{2 \rightarrow 0.3, 4 \rightarrow 0.7\}\), then the target objective function would be
	\[\mathit{Target}(\assignment) = - 0.3 \log \frac{0.3}{p_{G, \assignment, g}(2)} - 0.7 \log \frac{0.7}{p_{G, \assignment, g}(4)}. \]
\end{example}

We demonstrate the effectiveness of the target objective function in \autoref{fig:height-tunings} where we tune three different generators for uniform and linear distributions over a feature of the test cases. In \autoref{sec:etna}, we use the target objective function to leverage insight from \citet{etna} that smaller STLC terms find bugs faster, resulting in improved bug-finding speed.

To summarize, in this section we have shown how to turn a user-specified distribution into an objective function that we can optimize for. This is useful when the user has intuition about what the generator distribution should look like.

\subsection{Objective Function to Improve Diversity and Validity}\label{sec:objectives-spec-entropy}

If a user does not have a target distribution in mind, they may instead
want to tune the
generator weights to improve the diversity of their test cases and increase the
number of valid ones. This has the potential to speed up testing by exercising a
wider variety of program configurations and reducing the time spent generating
invalid inputs. This section describes the objective functions to optimize for
these distributional properties and how one can combine them.

\subsubsection{Targeting Diverse Generations}

In this section we examine how to improve
the diversity of the test
cases produced by a generator. For this purpose, we define the entropy objective function using the
information-theoretic notion of entropy~\cite{shannon} of a probability
distribution.

\begin{definition}[entropy objective function]
	The entropy objective function for a generator \(G\) is defined as the entropy of its generator distribution \(p_{G, \assignment}\).
	\begin{equation*}
		\mathit{Entropy}(\assignment) \mathdefeq H(p_{G, \assignment})
= - \sum_{t \in T} p_{G, \assignment}(t) \log{p_{G, \assignment}(t)}
	\end{equation*}
\end{definition}

Note that a uniform distribution over all possible test cases has the maximum entropy, thus maximizing the entropy objective function takes the generator distribution closer to a uniform distribution over all possible generations.

\subsubsection{Targeting Valid Generations}

Many common PBT examples have {\em preconditions} that define the set of
valid inputs to the program under test.
For instance, to test a program that inserts
elements into a red-black tree,
it is only useful to generate trees that satisfy
the red-black tree invariant. To perform automated tuning for this purpose, we define the \textit{specification objective function}.\footnote{
	This definition of the specification objective function is
	in accordance with \citet{semantic-loss}.
}

\begin{definition}[specification objective function]
	Given a generator \(G\) for test cases of type $T$ and a validity condition \(\phi : T \to \{0, 1\}\), the specification objective function is defined as
	\begin{equation*}
		\mathit{Specification}(\assignment) \mathdefeq \log{p_{G, \assignment, \phi}(1)}
= \log \bigg( \sum_{
        \substack{t \in T \\ \phi(t) = 1}
    } p_{G,\assignment}(t) \bigg).
	\end{equation*}
\end{definition}

\noindent
Intuitively, the specification objective function
attempts to maximize the probability that a generated test case meets the validity condition $\phi$.

\subsubsection{Targeting Diverse, Valid Generations} The previous subsections discussed objective functions to target diversity and validity.
However, these two objectives inherently conflict.
Tuning for diversity incentivizes large terms, which are more likely to be diverse but less likely to be valid. Tuning for validity incentivizes trivially valid terms such as empty trees or lists.
As a result, the common technique of combining objectives by taking their weighted sum is not effective here.
To resolve this tension, we introduce the \textit{specification entropy objective function}, which
targets the entropy of the generator distribution \textit{within} the space of valid test cases.

\begin{definition}[specification entropy objective function]
	Given a generator \(G\) and a validity condition 
	\(\phi : T \to \{0, 1\}\),
	the specification entropy objective function is defined as
	\begin{equation*}
		\mathit{SpecificationEntropy}(\assignment) \mathdefeq - \sum_{\substack{t \in T \\ \phi(t) = 1}} p_{G, \assignment}(t) \log{p_{G, \assignment}(t)}. 
	\end{equation*}
\end{definition}

\noindent
This objective function
aims to generate diverse test cases, except it 
disregards test cases that are invalid.

\begin{figure}[t]
	\centering
				\ifcharts
		\begin{tikzpicture}
			\begin{axis}[
				height=2.2cm,width=3.5cm,
				grid=major,
				xlabel near ticks,
				ylabel near ticks,
				scale only axis,
				xticklabel style={
					/pgf/number format/fixed,
					/pgf/number format/1000 sep={,\!},
					/pgf/number format/precision=0,
					font=\scriptsize
				},
				xlabel style = {font=\footnotesize},
				ylabel style = {font=\footnotesize},
				title style = {font=\small},
				title={Cumulative Unique Valid RBTs},
				xlabel={Number of Samples},
				ylabel={\# of Unique Valid RBTs},
				yticklabel style = {font=\scriptsize},
				xminorticks=false,
				yminorticks=false,
				scaled x ticks=false,
				legend style={
					at={(1.1,0.5)},
					anchor=west,
					font=\scriptsize,
					draw=none
				},
				legend entries={Untuned,
					Validity,
					Entropy,
					Specification Entropy
				},
				]
				\addplot[mark=none, thick, \initialcolor] table[x=Samples, y=UntunedGenerator.v] {\rbtablationcumulativeuniqdata};
				\addplot[mark=none, thick, teal] table[x=Samples, y=ValidBoundGenerator.v] {\rbtablationcumulativeuniqdata};
				\addplot[mark=none, thick, \entropycolor] table[x=Samples, y=EntropyBoundGenerator.v] {\rbtablationcumulativeuniqdata};
				\addplot[mark=none, thick, \trainedcolor] table[x=Samples, y=SEBoundGenerator.v] {\rbtablationcumulativeuniqdata};
			\end{axis}
		\end{tikzpicture}
				\fi
	\caption{Cumulative unique valid red-black trees throughout sampling, for our type-based RBT generator tuned for the different objective functions.
	We regularize weights as described in \autoref{sec:regularization}.
	}
	\label{fig:unique1}
\end{figure}

To illustrate these objectives, we took our
type-based generator for color-labeled binary trees and tuned it for each
of them. Once we had the tuned weights, we sampled \(10^5\) trees
from the generator and computed the number of unique and valid RBTs we obtained
over sampling. We show the results in \autoref{fig:unique1}. 

When we tuned this generator for the entropy objective function, we got diverse color-labeled binary trees but not many that were valid with respect to the RBT invariant. On the other hand,
When we tuned it for validity we got valid test cases, but they are not diverse, as also shown in \autoref{fig:unique1}.
Indeed, that generator mostly produced red-black trees of height 0 in order to trivially satisfy the RBT invariant.
Finally,
\autoref{fig:unique1} shows that when tuned for specification entropy, the generator generates a much higher number of red-black trees that are both unique and valid, and it greatly outperforms the untuned version.

\subsubsection{Targeting Diverse, Valid Generations with Respect to a Feature}

In \autoref{sec:objectives-target}, we discussed how PBT users might wish to target a particular distribution over some feature of their generated test cases. What if they instead want to improve the diversity with respect to a feature? One can then tune the generator weights to maximize the entropy of the push forward of the generator distribution within the space of valid test cases. We define it formally as follows:

\begin{definition}[feature specification entropy objective function]
	    Given a generator \(G\), a function \(g: T \rightarrow T'\), and a validity condition \(\phi\),
		the feature specification entropy objective function
		is defined as
	\begin{equation*}
		\mathit{FeatureSpecificationEntropy(\assignment)} \mathdefeq - \sum_{
			\substack{t \in T \\ g(t) = t' \\ \phi(t) = 1}} p_{G, \assignment}(t) \log{p_{G, \assignment, g}(t')}.
	\end{equation*}
\end{definition}

\begin{example}
	Consider that \(G\) is a type-based STLC generator, and one wishes to tune its weights to produce well-typed terms that are of diverse types. Here, $\phi$ is the validity condition of well-typedness, and $g$ is a function that takes as input an STLC term and outputs its type. Then we can tune the weights of $G$ using feature specification entropy. 
\end{example}
	
\autoref{fig:stlc-unique-types} shows how the distribution of the types of the generated STLC terms changes as the weights get tuned for both well-typedness and diversity of types. Recall from \autoref{sec:overview-general-objectives} that tuning decreased
the proportion of terms that are ill-typed or of type \mono{bool} from 91\% to less than 1\%. 

\section{\ld: A Language for Tunable Generators} \label{sec:loaded-dice}

The previous section described how we can map the intuition of PBT users to mathematical objectives. The users still need to write their generators whose weights can be tuned for these objective functions. For this purpose, we first describe our language \ld, where users can write their generators as probabilistic programs with symbolic weights. We then describe its implementation as an embedded DSL in Julia. Finally, we describe how users can add more tunable weights in their generators to increase their ability to be optimized for an objective function. 

\subsection{Syntax and Semantics}

\begin{figure}[t]
	\begin{align*}
    \defkind{Types}{\tau}{\assign}{\mathrm{Bool} \variantor \tau_1 \times \tau_2}
		\defkind{Values}{v}{\assign}{
			T
			\variantor F
			\variantor (v, v)
		}
		\defkind{Expressions}{\mathit{aexp}}{\assign}{
			x \variantor v
		}
		\defkind{}{e}{\assign}{
			\mathit{aexp}
			\variantor \uop{fst}{\mathit{aexp}}
			\variantor \uop{snd}{\mathit{aexp}}
			\variantor \letin{x}{e}\ e
			\variantor \uop{flip}{q}
		}
		\defkind{}{}{\variantor }{
			\dif{\mathit{aexp}}{e}{e}
		}
    \defkind{Program}{p}{\assign}{
      e %
    }
		\defkindnobr{Numeric Terms}{q}{\assign}{
			c \variantor \symweight 
    }
	\end{align*}%
	\caption{
		The syntax of \ld, which is a subset of \dice~\cite{dice} except that unlike \dice we allow symbolic weights.
		The metavariable $x$ ranges over variable names,
		$c$ ranges over real numbers in the range $[0, 1]$,
    $f$ ranges over function names,
    and
		$\theta$ ranges over variable names for symbolic weights.
	}\label{fig:dice-syntax}
\end{figure}

\newcommand{\dbracket}[1]{ \left\llbracket { #1 } \right\rrbracket} 
\newcommand{\mcrot}[4]{\multicolumn{#1}{#2}{\rlap{\rotatebox{#3}{#4}~}}} 
\newcommand{\Llet}[2]{ {\texttt{let}~#1~\texttt{in}~#2} } %
\newcommand{\Lobs}[1]{ {\texttt{observe}~#1}}
\newcommand{\Lflip}[1]{ {\texttt{flip}~#1}}
\newcommand{\Lfst}[1]{ {\texttt{fst}~#1}}
\newcommand{\Lsnd}[1]{ {\texttt{snd}~#1}}
\newcommand{\Lite}[3]{ {\texttt{if}~#1~\texttt{then}~#2~\texttt{else}~#3}}
\newcommand{\te}[0]{ \texttt{e}}  %
\newcommand{\true}[0]{ \mathtt{T} }
\newcommand{\false}[0]{ \mathtt{F} }

\begin{figure}
	\centering
		\begin{align*}
			\dbracket{v_1}(v) \defeq \big(\delta(v_1)\big)(v)
			\quad\quad \dbracket{\Lfst{(v_1, v_2)}}(v) \defeq \big(\delta(v_1)\big)(v)
			\quad\quad \dbracket{\Lsnd{(v_1, v_2)}}(v) \defeq \big(\delta(v_2)\big)(v)
		\end{align*}
		\begin{align*}
			\dbracket{\Lite{v_g}{\te_1}{\te_2}}(v) \defeq
			\begin{cases}
				\dbracket{\te_1}(v) ~& \text{if } v_g = \true\\  
				\dbracket{\te_2}(v) ~& \text{if } v_g = \false\\
				0 \quad& \text{otherwise}
			\end{cases} \quad\quad
			\dbracket{\Lflip{c}}(v) \defeq& \begin{cases}
				c ~& \text{if }v = \true\\
				1-c ~& \text{if }v=\false\\
				0 ~& \text{otherwise}
			\end{cases}
		\end{align*}
		\begin{align*}
			\dbracket{\Llet{x = \te_1}{\te_2}}(v)\defeq
			\sum_{v'}\dbracket{\te_1}(v') \times \dbracket{\te_2[x \mapsto v']}(v) 
		\end{align*}
		\caption{Semantics for \dice{} expressions. The function $\delta(v)$ is a
			probability distribution that assigns a probability of 1 to the value $v$ and 0
			to all other values.}
		\label{fig:semantics}
	\end{figure}

For automated generator tuning, we treat generators as probabilistic programs that represent distributions over test cases. For the purpose of writing generators, we describe \ld, an extension of the discrete probabilistic programming language \dice~\cite{dice} with symbolic weights.

The syntax of \ld is given in \autoref{fig:dice-syntax}. \ld is a first-order functional language with support for booleans, tuples, and typical operations over these types.\footnote{
	\ld also provides support for \emph{probabilistic conditioning}, but we omit it in \autoref{fig:dice-syntax} because none of the generators require it.} It is augmented with the ability to create Bernoulli distributions via the \mono{flip} syntax: the expression \mono{flip $q$} represents a distribution that is true with probability $q$ and false with probability $1 - q$.
In \dice, arguments to \mono{flip}s must be numeric constants; in \ld, they may also be {\em symbolic weights} denoted by metavariable $\symweight$.
During generator tuning, it is these very symbolic weights that are tuned to optimize for an objective function. We describe the process of tuning in \autoref{sec:training-techniques}.

\ld inherits its semantics from \dice~\cite{dice}, replicated in \autoref{fig:semantics}.
For \dice, the semantic function \(\dbracket{\cdot}\)
maps expressions to
probability distributions,
where these probability distributions are functions from values to their probability mass. 
In order to support symbolic weights in \ld, we lift these semantics
to the semantic function \(\dbracket{\cdot}_L\), which
maps expressions $e$ with $n$ symbolic weights
and assignments $w \in [0,1]^n$ 
to the distribution that \dice semantics would result in if all symbolic
weights were substituted by $\assignment$. Formally,
\[\dbracket{e}_L(\assignment) \defeq \dbracket{e[\symweight \mapsto \assignment]}.\]
The \ld semantics provide us a representation of the distribution in terms of the
symbolic weights, which we require to learn weights as described in \autoref{sec:training-techniques}.

\subsection{Implementation} \label{sec:enriching-ld}

\newcommand{\bm}[1]{\textbf{\mono{#1}}}
\begin{figure}[htb]
	\begin{align*}
    \defkind{Types}{\tau}{\assign}{
      \textit{t} \variantor \mono{Int} \variantor \mono{Bool} \variantor 
      \mono{Tuple}\mono{\{}\tau_1,\dots\mono{\}}
      }
		\defkind{Statements}{s}{\assign}{
      \text{\textbf{\mono{@type}}}~\textit{t} = C_1(\tau_{11}, \dots)~\text{``\textbar{}''}~C_2(\tau_{21}, \dots)~\text{``\textbar{}''}~\dots
		}
    \defkind{Numeric Terms}{q}{\assign}{
       c \variantor \theta
    }
		\defkind{Expressions}{e}{\assign}{
      \bm{@match } e \mono{ (} C_1(x_{11}, \dots) \to e_1\mono{, }C_2(x_{21}, \dots) \to e_2, \dots \mono{)}}
   \defkind{}{}{}{
     \variantor \bm{@dice if } e_1\ e_2 \bm{ else } e_3 \bm{ end}
     \variantor \mono{flip(} q \mono{)} \variantor \mono{(} e_1, \dots \mono{)}
   }
    \defkindnobr{}{}{}{
      \variantor \mono{freq(} q_1 \Rightarrow e_1, \dots \mono{)}
      \variantor \mono{backtrack(} q_1 \Rightarrow e_1, \dots \mono{)}
    }
  \end{align*}
  \caption{Syntax for the library functions and macros that make up the Julia embedding of \ld. The metavariable $x$ ranges over variable names, $C$ ranges over constructor names, $t$ ranges over type names, $c$ ranges over numeric constants, and $\theta$ ranges over symbolic weights.}
  \label{fig:julia-syntax}
\end{figure}

We implement \ld as an embedded domain-specific language (DSL) in Julia, and include extensions that make it easier to express the kinds of generators used by practitioners. 

We show the syntax for the library functions and macros that make up the Julia
embedding of \ld in \autoref{fig:julia-syntax}.
As shown in earlier examples, this embedding includes the ability to declare algebraic data types and to pattern match on them. We provide the standard \mono{freq} and \mono{backtrack}\footnote{\mono{backtrack} samples from a list of optional values, resampling without replacement upon sampling \mono{None}.} combinators for PBT~\cite{PierceSF4}.  
There are also constructs that correspond one-to-one with \ld, such as \mono{\textbf{@dice if}}, \mono{flip}s, and tuples.
These constructs in the embedded language form a library of functions and macros which are composed within a larger Julia program.

To be precise, this Julia program is not a generator, but a metaprogram whose
execution produces a \ld program in the
syntax shown in \autoref{fig:dice-syntax}.
Thus, the functions and macros of the embedded DSL (\autoref{fig:julia-syntax})
are a library for constructing the data structure representing \ld programs.
The rest of the program is ``just Julia''
-- there is no special handling of Julia's language constructs, which are simply executed to produce the \ld program. This allows one to use arbitrary constructs (such as loops, side effects, etc.) to aid in constructing a generator, without them being materialized in the final \ld program.

In order to implement data structures such as integers and inductive types, 
we 
\emph{bit-blast}~\cite{bruttomesso09scalable,bitblasting}
them to representations in terms of \ld's tuples and booleans.
For example, integers are binary-encoded as tuples of booleans~\cite{scaling-integer-arithmetic}. 
To represent algebraic data types (e.g. lists and trees), we encode them as sum types
and use an explicit discriminator and placeholder values for absent components of the sum. Specifically,
the sum type \(\tau_1 + \tau_2\) is encoded as the nested product type \(\mathrm{Bool} \times \tau_1 \times \tau_2\). Here, the boolean indicates whether the value is of type \(\tau_1\) or \(\tau_2\) and the other two components encode the value.
This can be generalized to sums of arbitrary numbers of types by using an integer value as the discriminator.
For example, for lists of length up to two, \mono{[10;20]} is encoded as \mono{(2,10,(2,20,(1,)))} and \mono{[10]} is encoded as \mono{(2,10,(1,0,(0,)))}, where \mono{2} is the tag for \mono{Cons}, \mono{1} is the tag for \mono{Nil}, and \mono{0} is a placeholder value for absent arguments.

Lowering expressions to the core language is also straightforward.
Pattern  matching on values of an algebraic data type is lowered to the use of \mono{if} expressions, as is standard.  The \mono{freq} and \mono{backtrack} combinators are lowered to \mono{if} and \mono{flip}.
For example, \mono{freq([$q_1 \Rightarrow e_1$,$q_2 \Rightarrow e_2$,$q_3 \Rightarrow e_3$])} is lowered to
\mono{\textbf{if} flip($\frac{q_1}{q_1 + q_2 + q_3}$) \textbf{then} $e_1$ \textbf{else if} flip($\frac{q_2}{q_2 + q_3}$) \textbf{then} $e_2$ \textbf{else} $e_3$ \textbf{end}}.

\section{Constructing Tunable Generators} \label{sec:constructing-tunable-generators}

Using \ld,
users can define the structure of a generator and 
leave the weights undetermined, to be optimized by
automated generator tuning.
However, the space of distributions that are possible to achieve by tuning the
weights of the generator is limited by the generator's structure.\footnote{Using
the notation from \autoref{sec:objectives}, a generator \(G\) with \(n\) symbolic parameters exhibits the space of distributions represented by $\{ p_{G,w} \mid
\assignment \in [0,1]^n \}$.}
This, in turn, limits the extent to which \ld can optimize for an objective
function.\footnote{This is the classic problem of \textit{underfitting}, which
is well-studied in the machine learning literature~\cite{pml1Book}.}
We describe how generators can be written to be more ``tunable,''
and how we can derive tunable generators from a type definition.

\subsection{Adding Dependencies} \label{sec:good-generators-to-tune}

\newcommand{\maxsize}{\ensuremath{m}\xspace}
\newcommand{\codesfwidth}{0.96\textwidth}
\newcommand{\codespaceabovecaption}{\vspace{-0.5em}}

\begin{figure}[p]
\newcommand{\continuationnum}{3}
\begin{subfigure}{\codesfwidth}
	\begin{lstlisting}
X$\label{line-gen-tree-flips-red}$XgenColor() = @dice if flip($\theta_\mono{red}$) R() else B() end

genTree(size) =
  @match size (
    0 X$\to$X Leaf(),
    S(n) X$\to$X
      X$\label{line-gen-tree-flips-leaf}$X@dice if flip(X$\theta_\mono{leaf}$X)
        Leaf()
      else
        Branch(genColor(), genTree(n), genNat(), genNat(), genTree(n))
      end)

G = genTree(5)
	\end{lstlisting}
  \codespaceabovecaption
	\caption{An RBT generator following the same structure as QuickChick's type-based generators.}\label{fig:gen-tree-flips}
\vspace{1em}
\end{subfigure}
\begin{subfigure}{ \codesfwidth }
	\begin{lstlisting}[firstnumber=\continuationnum]
genTree(size) =
  @match size (
    0       X$\to$X Leaf(),
    S(n) X$\to$X (
      X\hl{w = \textbf{@match} size (1 $\to$ $\theta_1$, ..., \maxsize $\to$ $\theta_{\maxsize}$);}X
      @dice if flip(X\hl{w}X)
        Leaf()
      else
        Branch(genColor(), genTree(n), genNat(), genNat(), genTree(n))
      end))
	\end{lstlisting}
  \codespaceabovecaption
	\caption{Adding weights that depend on \mono{size} increases the distributions over height the generator can express.
	}\label{fig:gen-tree-flips-split-by-size}
\vspace{1em}
\end{subfigure}
\begin{subfigure}{ \codesfwidth }
	\begin{lstlisting}[firstnumber=\continuationnum]
genTree(size, X\hl{parentColor}X) =
  @match size (
  	0       X$\to$X Leaf(),
  	S(n) X$\to$X (
  	  X\hl{w = \textbf{@match} (size, parentColor) ((0,\textbf{R}()) $\to$ $\theta_\mono{0R}$, (0,\textbf{B}()) $\to$ $\theta_\mono{0B}$, $\dots$, (\maxsize,\textbf{B}()) $\to$ $\theta_\mono{\maxsize{}B}$);}X
  	  @dice if flip(X\hl{w}X)
        Leaf()
  	  else
        X\hl{c = genColor()}X
        Branch(X\hl{c}X, genTree(n, X\hl{c}X), genNat(), genNat(), genTree(n, X\hl{c}X))
  	  end))
			\end{lstlisting}
  \codespaceabovecaption
	\caption{Adding a function parameter allows weights to depend on both \mono{size} and the color of the parent call.
	}\label{fig:gen-tree-flips-parent-color}
\vspace{1em}
\end{subfigure}
\begin{subfigure}{ \codesfwidth }
	\begin{lstlisting}[firstnumber=\continuationnum]
genTree(size, X\hl{leaf}X) =
  @match size (
    0       X$\to$X Leaf(),
    S(n) X$\to$X
      X$\label{line-gen-tree-flips:9}$X@dice if X\hl{leaf}X
        Leaf()
      else
        X\hl{leftLeaf, rightLeaf = freq([ $\theta_\mono{1}$$\to$(\textbf{F},\textbf{F}), $\theta_\mono{2}$$\to$(\textbf{F},\textbf{T}), $\theta_\mono{3}$$\to$(\textbf{T},\textbf{F}), $\theta_\mono{4}$$\to$(\textbf{T},\textbf{T}) ]);}X
        Branch(genColor(), genTree(n, X\hl{leftLeaf}X), genNat(), genNat(), genTree(n, X\hl{rightLeaf}X))
      end)
	\end{lstlisting}
  \codespaceabovecaption
	\caption{Restructuring the generator to frontload choices allows correlations between choices to be introduced.
	}\label{fig:gen-tree-flips-is-leaf}
\end{subfigure}
\caption{Modifications to a generator for RBT trees (not necessarily valid ones)
to increase the space of distributions it can exhibit, written with syntactic sugar in \ld.
}
\label{fig:adding-dependencies-techniques}
\end{figure}

We describe general techniques to make generators more amenable to automated tuning.

\subsubsection{Parameterizing Weights by Function Arguments}\label{sec:splitting}

Consider the generator in \autoref{fig:gen-tree-flips}.
The user may wish to tune it for a particular distribution over heights.
However, the generators' distribution over heights only depends on one symbolic variable, $\theta_\mono{leaf}$, which limits the 
extent to which tuning can optimize this generator. 

A simple way to increase the expressivity of a generator is to add weights that depend on information already in scope.
Concretely, 
rather than using $\theta_\mono{leaf}$ in all invocations of \mono{genTree},
we can select one of multiple weights based on the current value of the \mono{size} parameter, as shown in \autoref{fig:gen-tree-flips-split-by-size}.
Let \maxsize denote the maximum size that this function is called with (5, in \autoref{fig:gen-tree-flips}).
Note that this generator now uses \(\maxsize + 1\) symbolic weights instead of only two.

To add more symbolic weights in their generator, the user does not have to be limited by the preexisting structure of their generator.
They can add additional function arguments to their generators for more symbolic weights.
For example, in the RBT generator, the user can pass down the chosen color to each subcall, in order to parameterize the symbolic weights by the color of the parent node.
This is shown in \autoref{fig:gen-tree-flips-parent-color}.
Now, the generator consists of \(2\maxsize + 1\) tunable weights.

\subsubsection{Correlating Random Choices in the Generator}

We described how we can increase the number of symbolic weights in a generator
by parameterizing them over the function arguments. This technique, including
the change shown in \autoref{fig:gen-tree-flips-parent-color}, has another effect: it correlates random choices in
the generator that were previously independent. In particular, the generator in
\autoref{fig:gen-tree-flips} chooses between different constructors, namely
\texttt{Leaf} and \texttt{Branch}, independently of the color of the parent
node.
But that is no longer the case in \autoref{fig:gen-tree-flips-parent-color}, which parametrizes symbolic weights by the \texttt{parentColor}. Depending on the parent \texttt{color}, the probability of choosing \texttt{Leaf} changes.

The resulting dependency among random choices allows the generator to more closely fit an objective function.
One can add more dependencies by \emph{frontloading} random choices, allowing them to be made in tandem. For example, the RBT generator in \autoref{fig:gen-tree-flips} makes independent choices for the constructors of the two children of a \texttt{Branch}. Instead, frontloading these choices can correlate them, as shown in \autoref{fig:gen-tree-flips-is-leaf}. The \texttt{genTree} function in that figure chooses the constructors for the two children beforehand and passes that information as arguments to the subcalls.

\subsection{Automatically Deriving Generators with Dependencies}
\label{sec:explain-derived}

\quickcheck provides its users the convenience of deriving simple generators for types automatically from their definitions. We observe that we can do the same for algebraic data types in \ld, but we can additionally use the ideas from above to make these generators more ``tunable'' by introducing additional dependencies.

\paragraph{API} To automatically derive generators with additional dependencies, we provide the metaprogramming~\cite{bawden99quasiquotation} function \mono{derive_generator}. It takes as input an algebraic data type definition in \ld, an integer representing the maximum size of generated values, and an integer specifying the \emph{stack lookback length}. Given this information, \mono{derive_generator} produces a generator for the given type that is parameterized by a size value and a portion of the execution context. Specifically, the tracked execution context is a representation of a suffix of the call stack,
up to the specified lookback length.
For example, for a binary tree data type, a stack lookback length of 2 indicates that the choice of a node should depend on the execution trace since the recursive call corresponding to the node two levels higher.
The generator also frontloads choices so that they can be correlated, as shown in the previous subsection, so it also passes down choices to the appropriate places.

\paragraph{Implementation} At a high level, the function \mono{derive_generator} constructs the desired generator in \ld by introducing new symbolic weights for each possible value of the dependencies, specifically the current size and portion of the call stack. To provide more intuition, we illustrate how \mono{derive_generator} works using the example of our type-derived RBT generator in \autoref{fig:derived-rbt-generator}, where the user has provided a maximum size of 5 and a lookback window length of 2.

The function \mono{derive_generator} first defines a type representing a choice of constructor (Lines \ref{line:colorc} and \ref{line:treec} of \autoref{fig:derived-rbt-generator}).
It then defines a sized generator with three arguments: \mono{size}, \mono{stack} and \mono{chosenCtor}. The derived generator here first checks if \mono{size} is zero, in which case it simply generates a \mono{Leaf}. If \mono{size} is not zero, it  pattern matches on \mono{chosenCtor}, which represents the choice of which constructor to use at this node in the tree, and then makes the random choices for the arguments to that constructor.

The code in \autoref{fig:derived-rbt-generator} uses a helper function that we have created called \mono{freqDep}.
Like the \mono{freq} combinator, \mono{freqDep}
makes a weighted random choice from a list of values. But while the \mono{freq} combinator specifies the weights directly, \mono{freqDep} takes \textit{dependencies} as an additional argument, and it introduces a different set of symbolic weights per value of the dependencies.  
In line \ref{line:freq-dep-example} of \autoref{fig:derived-rbt-generator}, we use \mono{freqDep} to choose among the eight possible combinations of (left-subtree constructor, color, right-subtree constructor) for the \mono{Branch} node being created, but with different symbolic weights per value of \mono{deps}, which includes the current size and relevant portion of the call stack.  The \mono{freqDep} function is implemented as a sequence of conditionals that branches on the possible values\footnote{This set of possible values is a static value with respect to the \ld program, and is computed in the lowering of \mono{freqDep}. %
} of \mono{deps} and introduces separate symbolic weights for each one.

A more general treatment of \mono{derive_generator} can be found in Appendix A. %

\definecolor{cmtcolor}{rgb}{0.5,0.6,0.5}
\begin{figure}[htb]
\begin{lstlisting}[
  commentstyle=\color{cmtcolor},
breaklines=true,
emph={@match,@type,function,for,in,let,match,with,end,type,and,as,begin,fun,of,map,mapi,product,*,|,(,)},
escapeinside=XX,
mathescape=false,
literate={->}{{$\rightarrow$}}2,
columns=flexible
]
@type Color = Red | Black
@type Tree = Leaf | Branch(Tree,Color,Int,Tree)

# Types automatically generated to represent the choice of constructor
@type ColorC = RedC | BlackCX\label{line:colorc}X
@type TreeC = LeafC | BranchCX\label{line:treec}X

genColorHelper(size, stack, chosenCtor) = # elided for brevity

function genTreeHelper(size, stack, chosenCtor)
  deps = (size, stack, chosenCtor)
  @match size (
    0 ->X\ XLeaf,
    S(size') ->
      @match chosenCtor (
        LeafC ->X\ XLeaf,X\label{line:leafc}X
        BranchC ->X\ Xbegin
          argChoices = freqDep(deps, [X\label{line:freq-dep-example}X
            # We use () as the dummy value for the choice of the constructors for
            # the int argument, as it is a builtin type.
            [LeafC,RedC,(),LeafC], [LeafC,RedC,(),BranchC],
            [LeafC,BlackC,(),LeafC], [LeafC,BlackC,(),BranchC],
            [BranchC,RedC,(),LeafC], [BranchC,RedC,(),BranchC],
            [BranchC,BlackC,(),LeafC], [BranchC,BlackC,(),BranchC]])
          Branch(
            # 2 is the configured stack lookback length.
            # 0, 1, and 2 number the program locations where recursive calls are made.
            genTreeHelper(size', firstN(2, cons(0, stack)), argChoices[0]),
            genColorHelper(size', firstN(2, cons(1, stack)), argChoices[1]),
            genIntDep(deps),
            genTreeHelper(size', firstN(2, cons(2, stack)), argChoices[3]))
        end))
end

function genTree()
  rootCtor = freqDep((), [LeafC, BranchC])
  genTreeHelper(5, [], rootCtor) # 5 is the configured initial size
end
\end{lstlisting}
\caption{An automatically derived generator for red-black trees with dependencies (slightly simplified for presentation purposes).}
\label{fig:derived-rbt-generator}
\end{figure}

\section{Automatically Tuning a \ld Generator} \label{sec:training-techniques}

Given an objective function and a generator with symbolic weights,
the task of generator tuning is to determine the weights that maximize the objective function.
To achieve this, we use the typical optimization algorithm, gradient descent,
which requires computing gradients of the objective function
to determine the direction\footnote{Typically, gradient descent updates the parameters in the \textit{opposite} direction of the gradient, in order to \textit{minimize} a loss function rather than to maximize an objective function. Thus we technically are performing gradient \textit{ascent}.} in which to update the weights.

The na\"ive methods for both probabilistic inference and computing gradients 
require enumerating all execution paths in a discrete probabilistic program.
\dice scales \textit{inference} by exploiting program structure;
we leverage its existing compilation strategy to scale the computation of gradients. 
Still, given a method to compute gradients of \ld programs,
entropy-based objectives involve all possible test cases, and
enumerating the distribution is similarly intractable.
We instead adapt REFINFORCE, a gradient estimation technique,
to replace this enumeration with sampling.
Finally, we discuss regularization techniques to avoid overfitting generator weights.

\subsection{Probabilistic Inference and its Gradient}

Tuning generators via gradient descent requires computing the gradients of the objective function with respect to the symbolic weights. Since the objectives depend on the generator distribution (as described in \autoref{sec:objectives}), we need probabilistic inference as a primitive.

This poses the first key performance challenge: since probabilistic inference for arbitrary \dice programs is \#P-hard in general, computing its gradients is also a \#P-hard problem. \dice scales probabilistic inference
by compiling programs to data structures that
exploit program structure
to
compact the representation of distributions.
In \ld, we leverage the very same compilation procedure to scale the computation of gradients.

\subsubsection{Probabilistic Inference in \ld} \ld, as an extension of \dice,
inherits its strategy for probabilistic inference. \dice compiles its
probabilistic programs to ordered binary decision diagrams (OBDDs) and reduces
the task of probabilistic inference to computations on the compiled OBDD. The
task of probabilistic inference is \#P-hard in general, but OBDDs
exploit structure in probabilistic programs to reuse
intermediate computations and scale probabilistic inference when possible.

An OBDD produced by \dice and \ld is a directed acyclic graph where each node corresponds to a boolean random variable with a \emph{level} and is connected to two child nodes via a \emph{high edge} (\(h\)) and a \emph{low edge} (\(l\)).
The high edge corresponds to the variable being true and the low edge corresponds to the variable being false.
The two terminal nodes, \mono{T} and \mono{F}, don't have children.
Each level corresponds to a \texttt{flip} in the program and its associated probability (which may be symbolic in \ld).
Given fixed values for each boolean variable, one can traverse the OBDD by starting from its root and following the edges corresponding to the value of the each variable to evaluate the program as \mono{T} or \mono{F}.

Once the program is compiled to an OBDD, the task of probabilistic inference is reduced to a bottom-up traversal on this graph. Each node \(n\) in an OBDD is associated with the probability \(\mathit{pr}(n)\) of reaching the terminal \mono{T} from that node. Our goal is to compute \(\mathit{pr}(\texttt{root})\), %
which gets further used in the computation of the objective function. Now, \(\mathit{pr}(\texttt{root})\) can be computed recursively in a single bottom-up pass of the OBDD using the following equations, where \(w\) is the weight associated with the level of the node and \(h\) and \(l\) are its high and low children, respectively~\cite{probwmc, inference}.
\begin{align}
	\mathit{pr}(\mono{T}) &= 1,
	\;\;\;\;\;\;
	\mathit{pr}(\mono{F}) = 0 \tag{Base cases}
	\\ 
	\mathit{pr}(n) &= w\cdot\mathit{pr}(h) + (1 - w)\cdot\mathit{pr}(l) \tag{Inductive Case}
\end{align}

Note that the bottom-up traversal of an OBDD runs in time linear in the size of the OBDD~\cite{graph-based-boolean}. This implies that the task of probabilistic inference also runs in time linear in the size of the OBDD. Thus, if one can get a small OBDD for a probabilistic program, one can also achieve efficient probabilistic inference for the program.

As an example, consider the \ld program in \autoref{fig:ldprogram}. \ld compiles this program to the OBDD in \autoref{fig:bdd}. Note that even though there are 8 possible instantiations of the coin flips in the program, the OBDD consists of only three nodes. For this OBDD, \(\mathit{pr}(\text{root})\) can be computed as shown in \autoref{fig:wmc-program}, using the equations above.

\begin{figure}[t]
	\begin{subfigure}[b]{0.3\textwidth}
		\centering
		\begin{lstlisting}[emph={if,then,let,in}]
let x = flipX\(_1\)X X\(\theta_0\)X in
let y = if x then flipX\(_2\)X X\(\theta_0\)X 
        else flipX\(_3\)X 0.9 in
y
		\end{lstlisting}
		\caption{A \ld program.} \label{fig:ldprogram}
	\end{subfigure}
  \hfill
	\begin{subfigure}[b]{0.2\textwidth}
            \ifcharts
			\begin{tikzpicture}
			\def\lvl{20pt}
			\node (f1) at (0, 0) [bddnode] {$f_1$};
			\node (f2) at ($(f1) + (-20bp, -\lvl)$) [bddnode] {$f_2$};
			\node (f3) at ($(f1) + (20bp, -2*\lvl)$) [bddnode] {$f_3$};
			\node[draw, rounded corners] at ($(f1) + (-45bp, 0bp)$)  {\(\theta_0\)};
			\node[draw, rounded corners] at ($(f1) + (-45bp, -20bp)$)  {\(\theta_0\)};
			\node[draw, rounded corners] at ($(f1) + (-45bp, -40bp)$)  {0.9};
			\node (false) at ($(f3) + (-30bp, -1*\lvl)$) [bddterminal] {F};
			\node (true) at ($(f3) + (-15bp, -1*\lvl)$) [bddterminal] {T};
			\node (y) at ($(f1) + (0bp, 20bp)$) [bddroot] {$y$};
			\begin{scope}[on background layer]
				\draw [highedge] (f1) -- (f2);
				\draw [lowedge] (f1) -- (f3);
				\draw [highedge] (f2) -- (true);
				\draw [lowedge] (f2) -- (false);
				\draw [highedge] (f3) -- (true);
				\draw [lowedge] (f3) -- (false);
				\draw[-stealth] (y) -- (f1);
			\end{scope}
		    \end{tikzpicture}
            \fi
		\caption{Compiled BDD representation.}
		\label{fig:bdd}
	\end{subfigure}
  \hfill
  \begin{subfigure}[b]{0.33\textwidth}
\begin{flalign*}
	\mathit{pr}(f_3) &= 0.9 \times 1 + (1 - 0.9) \times 0 = 0.9,
	\\
	\mathit{pr}(f_2) &= \theta_0 \times 1 + (1 - \theta_0) \times 0 = \theta_0,
	\\
	\mathit{pr}(f_1) &= \theta_0 \times \mathit{pr}(f_2) + (1 - \theta_0) \times \mathit{pr}(f_3) 
	\\
	&= \theta_0^2 + 0.9 (1 -\theta_0).%
\end{flalign*}
    \caption{The weighted model counting computation for node probabilities.}
    \label{fig:wmc-program}
  \end{subfigure}
  \caption{A \ld program is first compiled to an OBDD which is then traversed via the procedure of weighted model counting (WMC). WMC generates an expression in terms of the symbolic weights, which can then be autodifferentiated to optimize via gradient descent.}
\end{figure}

\subsubsection{Computing Gradients} 

Now, we describe how we can leverage the structure-exploiting properties of an OBDD to compute gradients efficiently.
First, note that the expressions in \autoref{fig:wmc-program} are differentiable with respect to the symbolic weights. This is actually the case for exact probabilistic inference in \ld in general. As a result, probabilistic inference in \ld can support generator tuning via gradient descent.\footnote{Throughout this paper, we initialize weights to have uniform values.} 

The question that remains is how one can actually compute these gradients efficiently.
The standard method in machine learning libraries~\cite{pytorch, jax,
tensorflow} is to compute gradients using \emph{automatic differentiation} over a
\emph{computation graph}. The computation of these gradients scales linearly
with the size of the computation graph. So if the computation graph is small,
computing gradients is efficient.
In fact, the OBDD that was earlier used for inference is exactly the computation graph we compute gradients for. All we need are the partial derivatives of \(\mathit{pr}(n)\) for an OBDD node $n$,
which compose via the chain rule of differentiation to compute the overall gradient.
\begin{flalign*}
	 \frac{\partial{\mathit{pr}(n)}}{\partial{\theta}} &= (\mathit{pr(h)} - \mathit{pr(l)})
	\;\;\;\;\;\;\;\;\;
	 \frac{\partial{\mathit{pr}(n)}}{\partial{\mathit{pr}(h)}} = \theta
	\;\;\;\;\;\;\;\;\;
	 \frac{\partial{\mathit{pr}(n)}}{\partial{\mathit{pr}(l)}} = 1 - \theta
\end{flalign*} 

Particularly, in \ld, we use the above equations to implement reverse-mode differentiation~\cite{autodiff} over the compiled OBDDs.

To have a complete picture of the workflow, consider again the \ld program in \autoref{fig:ldprogram}. The \ld compiler first compiles it to an OBDD as shown in \autoref{fig:bdd}. Then, via the process of weighted model counting, the compiler produces code corresponding to the resulting probability distribution, equivalent to that shown in \autoref{fig:wmc-program}. Since objective functions are computed in terms of this resulting probability distribution, the generated code along with the code to compute the objective functions constitutes a symbolic loss function and can be autodifferentiated to obtain gradients for each weight in the \ld program. These gradients are then used to perform one update in the algorithm of gradient descent.

\subsection{Scaling Gradient Computation for Entropy-Based Objective Functions}\label{sec:scale-entropy}

Now that we can efficiently compute gradients of the objective function, we can use gradient descent to automatically tune the generators.
But a key performance challenge still remains: for entropy-based objectives (\autoref{sec:objectives-spec-entropy}), the objective functions themselves enumerate all possible test cases,
as they are expectations with respect to the generator distribution. 
For example, consider the entropy objective function below, written as an expectation.

\begin{equation*}
		\mathit{Entropy}(\theta) = H(p_{G, \theta}) = - \mathop{\mathbb{E}_{x \sim p_{G, \theta}(.)}} \log{p_{G, \theta}(x)} = - \sum_{x \in X} p_{G, \theta}(x) \log{p_{G, \theta}(x)}
\end{equation*} 
Computing exact gradients for this function requires differentiating inference for all possible test cases that the generator can produce, which is not amenable to scaling.
To scale this computation, one may approximate the objective function via Monte Carlo sampling~\cite{monte-carlo-book}, as expectations can be approximated by averaging the inner expression over $N$ samples.
The resulting computation graph looks like \autoref{fig:commute}.
However, to compute the gradients over this computation graph, one needs to differentiate through the sampling operation, but as a discrete operation, it is not differentiable.

\begin{figure}[h]
	\centering

\begin{tikzpicture}[every text node part/.style={align=center},  transform shape,
		scale=0.9,
		node distance=5.3cm,
		auto,
		thick,
		box/.style={rectangle, draw, rounded corners=3pt, minimum width=1.5cm, minimum height=0.8cm, align=center, inner sep=6pt},
		arrow/.style={->, >=stealth, thick}
  ]
    \node[box] (gen) {Generator\\Distribution};
    \node[box] (samples) [right of=gen] {Samples\\$x_i \sim p_{G, \theta}(.)$};
    \node[box] (expectation) [right of=samples] {$\mathop{\mathbb{E}_{x \sim p_{G, \theta}(.)}} f(x)$};
    
    \draw [-stealth] (gen.east) -- (samples.west) node[midway, above, yshift=0.1cm] {Sampling};
    \draw [-stealth] (samples.east) -- (expectation.west) node[midway, above, yshift=0.1cm] {Average $f(x_i)$};
\end{tikzpicture}

	\caption{The computation graph for approximating an expectation. To approximate $\mathbb{E}[f(x)]$,
  we can sample from the generator distribution and average $f(x)$ over the samples.  }
  \label{fig:commute}
\end{figure}

To resolve this issue, we adapt a well-known gradient estimation technique from the literature, REINFORCE~\cite{REINFORCE}, to our context of automated generator tuning. Specifically, for entropy based objective functions, we can estimate its gradient as follows. We include the derivation in \mbox{Appendix C}.
\begin{equation*}
	\nabla_{\theta} \mathop{\mathbb{E}}_{x \sim p_{G, \theta}(.)} [\log{p_{G, \theta}(x)}] = \mathop{\mathbb{E}}_{x \sim p_{G, \theta}(.)} [\log{p_{G, \theta}(x)} \nabla_{\theta} \log{p_{G, \theta}(x)}]
\end{equation*}
The equation above eliminates the need to differentiate through the sample operation. Instead it reformulates the gradient as another expectation, known as a \textit{gradient estimator}, which can be directly computed from the samples by the same process as \autoref{fig:commute}. 
Note that in the computation of the inner expression for each sample,
\ld is used to compute the contained gradient exactly.

\begin{figure}[b]
	\centering
	\begin{subfigure}[b]{0.7\textwidth}
		\ifcharts
		\begin{tikzpicture}
			\begin{axis}[
				height=2.2cm,width=3.5cm,
				grid=major,
				xlabel near ticks,
				ylabel near ticks,
				scale only axis,
				xticklabel style={
					/pgf/number format/fixed,
					/pgf/number format/1000 sep={,\!},
					/pgf/number format/precision=0,
					font=\scriptsize
				},
				xlabel style = {font=\footnotesize},
				ylabel style = {font=\footnotesize},
				title style = {font=\small},
				title={Cumulative Unique Valid RBTs},
				xlabel={Number of Samples},
				ylabel={\# of Unique Valid RBTs},
				yticklabel style = {font=\scriptsize},
				xminorticks=false,
				yminorticks=false,
				scaled x ticks=false,
				legend style={
					at={(1.1,0.5)},
					anchor=west,
					font=\scriptsize,
					draw=none
				},
				legend entries={Untuned,Specification Entropy Without Regularization,
					Specification Entropy,
				},
				]
				\addplot[mark=none, thick, \initialcolor] table[x=Samples, y=UntunedGenerator.v] {\rbtablationcumulativeuniqdata};
				\addplot[mark=none, thick, \trainedcolor, dash pattern=on 3pt off 3pt, dash phase=0pt] table[x=Samples, y=SEGenerator.v] {\rbtablationcumulativeuniqdata};
				\addplot[mark=none, thick, \trainedcolor] table[x=Samples, y=SEBoundGenerator.v] {\rbtablationcumulativeuniqdata};
			\end{axis}
		\end{tikzpicture}
		\fi
	\end{subfigure}
	\caption{
Cumulative unique valid red-black trees throughout sampling, for our type-based RBT generator tuned for \specentropy, with and without regularization via bounded weights.
	}
	\label{fig:unique}
\end{figure}

\subsection{Regularization to Avoid Overfitting} \label{sec:regularization}

While performing optimization using gradient descent, it is very common to run into the problem of overfitting. Automated generator tuning is no exception to this problem. Overfitting happens when the weights of the generator are over-optimized for the objective function. Particularly, when we tune generators for the specification entropy objective function, the generator avoids producing more diverse terms to avoid the penalty for producing an invalid term. 

To avoid overfitting, regularization turns out to be an effective technique. Typical regularization techniques include adding a \textit{penalty term} to the objective function or eliminating certain values for the parameters. For effective generator tuning to avoid overfitting, we employ the latter and bound the weights in the generators between \([0.1, 0.9]\).

To demonstrate the effectiveness of using a regularization technique, we tune a type-based generator for red-black trees for the specification entropy objective function with and without regularization. Once the generators are tuned, we sample \(10^5\) trees using these generators and record the number of unique and valid RBTs we obtain. It is clear from the results shown in \autoref{fig:unique} that tuning with regularization allow the generators to achieve a much higher number of unique valid red-black trees as we obtain more samples.

\section{Evaluation: Bug-Finding on \etna Benchmarks} \label{sec:etna}

Previous sections, and in particular \autoref{sec:overview}, have already
demonstrated that our approach gives developers better control over their
generators' distributions. The experiments in \autoref{fig:height-tunings},
\autoref{fig:stlc-unique-types}, \autoref{fig:unique1}, and \autoref{fig:unique}
show that tuned generators are successfully optimized for various
objective functions.

In this section, we show that the better control we afford developers is
actually useful for the core purpose of PBT. In other words, we ask:
\begin{center}
\textbf{How effective is \ourframework at improving bug-finding performance?}
\end{center}

To investigate this question, we implemented \ld as an embedded domain-specific
language in Julia~\cite{Julia-2017} and used it to test various case studies
from the \etna benchmark suite~\cite{etna}. \etna was specifically developed to
evaluate PBT tools on how quickly they find bugs (pre-placed by the
benchmark authors) in example programs.

\subsection{The Testing Workloads}

We evaluate our approach on three of the four \rocq workloads from the \etna benchmark suite. The three workloads are designed to evaluate different PBT approaches to test programs that take as inputs binary search trees (BST), red-black trees, and terms of the simply-typed lambda calculus (STLC). Each of these workloads has a set of generators as well as a set of ``tasks,'' or bugs intentionally planted in the programs.
We use our approach to tune generators based on the following strategies for these three workloads:

\begin{itemize}
    \item \textbf{STLC, BST, and RBT Type-Based Generators}: We automatically derived type-based generators with additional dependencies for these types, as described in \autoref{sec:explain-derived}, with a stack lookback length of 2. We then tuned these generators for diversity and validity via \specentropy using our approach. We compare these tuned generators against the untuned\footnote{As with the rest of this paper, an ``untuned'' generator is one in which each random choice is uniformly distributed.} versions as well as the original generators.

    \item \textbf{STLC Bespoke Generator}: We first adapted \etna's bespoke STLC generator,
	a handwritten generator for STLC terms that uses backtracking to always generate well-typed terms, 
	to \ld.
	We fixed initial sizes and parameterized weights by the size argument of the current recursive call (\autoref{sec:good-generators-to-tune}). The generator is shown in Appendix B.
    We then tuned it to leverage existing insight by the authors of \etna. In particular, they make the observation that larger generations can be empirically detrimental for bug-finding (in the face of conventional PBT wisdom). In accordance with this observation, we tuned the STLC Bespoke generator for a target distribution over the number of syntactic function applications (\texttt{App} constructors), i.e.\ $\{0 \to 40\%, 1 \to 30\%, 2 \to 20\%, 3 \to 10\%\}$. We compare the tuned generator against the untuned one as well as the original bespoke generator. 
\end{itemize}

\subsection{Methodology}
We evaluated each of the above generators for the coverage and speed of bug-finding. We report the time (median over 11 trials)
it takes each of these generators to find the bugs in their respective workloads in \autoref{fig:plots} and \autoref{fig:stlc-bespoke-etna}. We report a timeout for a particular bug at 60 seconds.
We also report the time it took to tune the weights of these generators in Appendix D.

\subsection{Results}
\newcommand{\bespokemidrule}{}

\begin{figure}[!t]
	\newcommand{\subfigurewidth}{0.31\textwidth}
	\newcommand{\axiswidth}{5.3cm}
	\newcommand{\axisheight}{4cm}
	\hspace*{10pt}
	\begin{subfigure}[t]{0.9\linewidth}
		\centering
		\hspace{0.07\textwidth}
            \ifcharts
		\begin{tikzpicture} 
			\begin{axis}[%
				hide axis,
				xmin=10,
				xmax=50,
				ymin=0,
				ymax=0.4,
				legend columns=-1,
				legend style={draw=none,legend cell align=left,column sep=1ex}
				]
				\addlegendimage{mark=none, thick, teal, densely dashed, mark size = 1.0pt}
				\addlegendentry{\small \etna};
				\addlegendimage{mark=none, thick, \initialcolor, mark size = 1.0pt}
				\addlegendentry{\small Initial};
				\addlegendimage{mark=none, thick, \trainedcolor, mark size = 1.0pt}
				\addlegendentry{\small Tuned};
			\end{axis}
		\end{tikzpicture}
            \fi
	\end{subfigure}

    \begin{minipage}{\textwidth}
    \centering
            \raisebox{1.2cm}{%
                \begin{tikzpicture}
                    \node[rotate=90] at (0,0) {\small Number of Bugs};
                \end{tikzpicture}%
            }%
	\begin{subfigure}[t]{\subfigurewidth}
        \ifcharts
		\begin{tikzpicture}
			\begin{axis}[
				height=\axisheight,
				width=\axiswidth,
				grid=major,
				xmode=log,
				ymode=linear,
				xmax=65,
				xlabel style = {font=\small},
				ylabel style = {font=\small},
				xlabel={Time (in s)},
                title style = {font=\small},
                title={BST},
				legend columns=1,
				yticklabel style = {font=\scriptsize},
				xticklabel style = {font=\scriptsize},
				legend style={at={(5em,-1.5em)},anchor=north west, font=\tiny},
				xminorticks=false,
				yminorticks=false
				]
				\addplot[mark=none, thick, teal, densely dashed, mark size = 1.0pt] table[x={TypeBasedGeneratorx}, y={TypeBasedGeneratory}] {\bstdata}; %
				\addplot[mark=none, thick, \initialcolor, mark size = 1.0pt] table [x={
					BLSDThinGeneratorx
				}, y={
					BLSDThinGeneratory
				}] {\bstdata}; %
				\addplot[mark=none, thick, \trainedcolor, mark size = 1.0pt] table [x={
					BLSDThinSEFreq2SPB200IsBSTLR30Epochs2000Bound10Generatorx	
				}, y={
					BLSDThinSEFreq2SPB200IsBSTLR30Epochs2000Bound10Generatory
				}] {\bstdata}; %
			\end{axis}
		\end{tikzpicture}
        \fi
	\end{subfigure}
	\begin{subfigure}[t]{\subfigurewidth}
            \ifcharts
		\begin{tikzpicture}
			\begin{axis}[
				height=\axisheight,
				width=\axiswidth,
				grid=major,
				xmode=log,
				xtick={0.0001, 0.01, 1, 60},
				xmax=65,
				xlabel style = {font=\small},
				ylabel style = {font=\small},
				xlabel={Time (in s)},
                title style = {font=\small},
                title={RBT},
				yticklabel style = {font=\scriptsize},
				xticklabel style = {font=\scriptsize},
				xminorticks=false,
				yminorticks=false
				]
				\addplot[mark=none, thick, teal, densely dashed, mark size = 1.0pt] table[x={TypeBasedGeneratorx}, y={TypeBasedGeneratory}] {\rbtdata}; %
				\addplot[mark=none, thick, \initialcolor, mark size = 1.0pt] table [x={
					RLSDThinGeneratorx	
				}, y={
					RLSDThinGeneratory	
				}] {\rbtdata}; %
				\addplot[mark=none, thick, \trainedcolor, mark size = 1.0pt] table [x={
					RLSDThinSEFreq2SPB200IsRBTLR30Epochs2000Bound10Generatorx	
				}, y={
					RLSDThinSEFreq2SPB200IsRBTLR30Epochs2000Bound10Generatory
				}] {\rbtdata}; %
			\end{axis}
		\end{tikzpicture}
            \fi
	\end{subfigure}
	\begin{subfigure}[b]{\subfigurewidth}
            \ifcharts
		\begin{tikzpicture}
			\begin{axis}[
				height=\axisheight,
				width=\axiswidth,
				grid=major,
				xmode=log,
				xtick={0.0001, 0.01, 1, 60},
				xmax=65,
				xlabel style = {font=\small},
				xlabel style = {font=\small},
				ylabel style = {font=\small},
				xlabel={Time (in s)},
                title style = {font=\small},
                title={STLC},
				yticklabel style = {font=\scriptsize},
				xticklabel style = {font=\scriptsize},
				xminorticks=false,
				yminorticks=false
				]
				\addplot[mark=none, thick, teal, densely dashed, mark size = 1.0pt] table[x={TypeBasedGeneratorx}, y={TypeBasedGeneratory}] {\stlcdata}; %
				\addplot[mark=none, thick, \initialcolor, mark size = 1.0pt] table [x={SLSDThinGeneratorx}, y={SLSDThinGeneratory}] {\stlcdata}; %
				\addplot[mark=none, thick, \trainedcolor, mark size = 1.0pt] table [x={SLSDThinSEFreq2SPB200WellTypedLR30Epochs2000Bound10Generatorx}, y={SLSDThinSEFreq2SPB200WellTypedLR30Epochs2000Bound10Generatory}] {\stlcdata}; %
			\end{axis}
		\end{tikzpicture}
            \fi
	\end{subfigure}
    \end{minipage}
	\caption{Time in seconds vs. number of bugs found (higher is better) across
		workloads and generator strategies. The x-axes use log scaling.
		}\label{fig:plots}
\end{figure}

\newcommand{\doublerow}[2]{ \begin{tabular}{@{}r@{}}#1\\#2\end{tabular}}

\begin{figure}[!t]
    \centering
	\hspace{1em}
	\begin{subfigure}[t]{0.31\textwidth}
            \ifcharts
    		\begin{tikzpicture}
    			\begin{axis}[
    				height=4cm,
    				width=5.3cm,
    				grid=major,
    				xmode=log,
    				xtick={0.00001, 0.0001, 0.001, 0.01, 0.1, 1, 10, 60},
    				xlabel style = {font=\small},
    				ylabel style = {font=\small},
    				ylabel={Number of Bugs},
    				xlabel={Time (in s)},
    				yticklabel style = {font=\scriptsize},
    				xticklabel style = {font=\scriptsize},
    				xminorticks=false,
    				yminorticks=false,
                    legend style={font=\scriptsize},
                    legend entries={\etna,Initial,Tuned},
                legend style={at={(0.98,0.02)}, anchor=south east, font=\scriptsize},
    				]
    				\addplot[mark=none, thick, teal, densely dashed, mark size = 1.0pt] table[x={BespokeGeneratorx}, y={BespokeGeneratory}] {\bespokestlcdata}; %
    				\addplot[mark=none, thick, \initialcolor, mark size = 1.0pt] table [x={LBespokeGeneratorx}, y={LBespokeGeneratory}] {\bespokestlcdata}; %
    				\addplot[mark=none, thick, \trainedcolor, mark size = 1.0pt] table [x={SBespokeMLENumAppsTarget4321LR1Epochs250Generatorx}, y={SBespokeMLENumAppsTarget4321LR1Epochs250Generatory}] {\bespokestlcdata}; %
    			\end{axis}
    		\end{tikzpicture}
            \fi
		\caption{Time in seconds vs. number of bugs found (higher is better). The x-axis uses log scaling.}
  \label{fig:stlc-bespoke-etna-timings}
  \end{subfigure}
    \hfill
    \begin{subfigure}[t]{0.6\textwidth}
        \ifcharts
            \begin{tikzpicture}
                \begin{axis}[
                height=4cm,width=\textwidth-18pt,
                grid=major,
                xlabel near ticks,
                ylabel near ticks,
    			yticklabel style = {font=\scriptsize},
    			xticklabel style = {font=\scriptsize},
                xlabel style = {font=\small},
                ylabel style = {font=\small},
                title style = {font=\scriptsize},
                xlabel={Number of Applications},
                ylabel={Probability},
                xticklabel style = {font=\scriptsize},
                yticklabel style = {font=\scriptsize},
                xminorticks=false,
                yminorticks=false,
                scaled x ticks=false,
                legend style={font=\scriptsize},
                legend entries={Initial,Tuned,Objective},
                bar width=2.5pt,
                ymin=-0.02,
                enlarge x limits=0.06,
                xmax=16,
                    ]

                    \addplot[ybar, bar shift=-2.5pt, fill=\initialcolor!70, draw=\initialcolor!80, area legend] table[x={val}, y={probability}, restrict expr to domain={x}{0:16}] {\appsinitial}; %
                    \addplot[ybar, fill=\trainedcolor!70, draw=\trainedcolor!80, area legend] table [x={val}, y={probability}, restrict expr to domain={x}{0:16}] {\appstrained}; %
                    \addplot[ybar, bar shift=2.5pt, fill=black!30, draw=black, area legend, pattern=north east lines] table [x={val}, y={probability}, restrict expr to domain={x}{0:16}] {\appstarget}; %
                \end{axis}
            \end{tikzpicture}
        \fi
        \caption{Distribution of the number of applications. Truncated for space; all omitted values (max: 31) have probability less than 0.005.}
        \label{fig:stlc-bespoke-etna-dist}
    \end{subfigure}
  
	\caption{The results from tuning a bespoke STLC generator to leverage insight from \citet{etna}.}\label{fig:stlc-bespoke-etna}
\end{figure}

We found that \ourframework significantly improved bug-finding speed in all four of our benchmarks, in comparison to both the untuned versions and the generators from \etna. For the type-based stategy, tuning for \specentropy increased the bug-finding speed of our generators by 3.1--7.4$\times$ over the untuned versions and the \quickchick generators (\autoref{fig:timings-table}). \autoref{fig:plots} additionally shows that the number of bugs found by the tuned generator is greater or equal to the others at all times. For the bespoke STLC generator, tuning for a target distribution over the number of applications increased bug finding speed by at least 1.9$\times$ over the untuned version and the original version in \etna (\autoref{fig:timings-table}, \autoref{fig:stlc-bespoke-etna-timings}). Additionally, \autoref{fig:stlc-bespoke-etna-dist} validates that the distribution of applications did change as intended by tuning.

Training time for our tuned generators ranged from three to eight minutes.
One may notice that minutes of training time is significant compared to seconds to find bugs. However, this is a one-time cost, whereas generators are frequently run in continuous integration, sometimes as frequently as every code change.
Thus, the training cost amortizes over multiple testing runs.
We provide further detail into the training costs for our tuned generators in Appendix D.

\begin{table}[htb]
    \caption{Relative bug-finding speed of tuned generators over the original \etna generators and the untuned generators with additional parameters.
    	We report the geometric mean of the speedup relative to each baseline for all tasks in the workload.
           Timing data are from the same evaluation runs as \autoref{fig:plots} and \autoref{fig:stlc-bespoke-etna-timings}. 
    }\label{fig:timings-table}
    \begin{tabular}{@{}lcc@{}}
            \toprule
            & \multicolumn{2}{c}{Relative Bug-Finding Speed} \\
 Generator \& Workload & \etna $\to$ Tuned & Initial $\to$ Tuned \\ \midrule
            BST Type-Based       & 3.5$\times$      & 5.4$\times$                             \\
            RBT Type-Based       & 5.8$\times$      & 7.4$\times$                             \\

            STLC Type-Based      & 4.3$\times$      & 3.1$\times$                   \\ \bespokemidrule
                        STLC Bespoke         & 2.3$\times$      & 1.9$\times$                             \\        
            \bottomrule 
    \end{tabular}
\end{table}

\subsection{Internal Evaluation: Cause of Bug-Finding Speedup}

While these results show that tuning significantly improves bug-finding
performance, it is important to understand the cause of these speedups.
We hypothesize that \specentropy improves bug-finding by increasing the number of valid and
unique samples, and that tuning the STLC Bespoke generator for smaller terms increases bug-finding speed because of the increased generation speed.
To confirm this hypothesis, we measure the change in generation speed and the change in number of unique and valid samples and report in \autoref{fig:gen-speed-and-valid}.

\subsubsection{Generation Speed}

Tuning generators for specification entropy increased their bug-finding speed
(3.1--7.4$\times$) significantly more than their generation speed (0.5--1.3$\times$),
indicating that changes in generation speed were \textit{not} the primary cause of their speedup.
Conversely, tuning the STLC bespoke generator for smaller generations increased
its generation speed (4.4$\times$) more than its bug-finding speed
(1.9$\times$), also as expected.

We also observe that adapting \etna generators to the initial \ld generators changed 
generation speed (0.6--1.3$\times$) significantly less than the bug-finding
speedup of using tuned \ld generators instead of \etna generators (2.3--5.8$\times$). This
indicates that the primary advantage of the tuned generators was caused by the
tuning itself.

\subsubsection{Valid and Unique Samples}

We see in \autoref{fig:gen-speed-and-valid} that tuning for specification entropy significantly increases the number of valid and unique samples (matching the results of our prior experiment in \autoref{fig:unique1}).
In contrast, tuning the STLC Bespoke generator for smaller generations \emph{decreases}
the number of valid and unique samples. This highlights the tradeoff to be made 
depending on the tuning objective --- the virtue of tuning is that it allows one to choose that tradeoff.

We also provide validity rates of each generator, which consistently increases with tuning for \specentropy, in Appendix E.

\begin{table}[!htb]
    \caption{Relative generation speeds, and number of unique, valid samples out of 100,000 samples, for generators evaluated in \etna.
    }\label{fig:gen-speed-and-valid}
    \begin{tabular}{@{}lcccc@{}}
      \toprule
 & \multicolumn{2}{c}{Relative Generation Speed} & \multicolumn{2}{c}{Unique, Valid Samples}\\
 Generator \& Workload & \etna $\to$ Initial & Initial $\to$ Tuned & Initial & Tuned \\ \midrule
 BST Type-Based                   & 0.6$\times$  & 0.9$\times$  &  2,592 & 14,387     \\
 RBT Type-Based                   & 0.6$\times$  & 1.3$\times$  &   767 &  2,394      \\
 STLC Type-Based                  & 0.8$\times$  & 0.5$\times$  &   771 & 11,181      \\ \bespokemidrule
 STLC Bespoke                     & 1.3$\times$  & 4.4$\times$  & 55,322 & 10,589     \\
 \bottomrule
    \end{tabular}
\end{table}

\section{Discussion} \label{sec:discussion}

The previous section demonstrates empirically that our approach for automatic generator tuning enables faster bug finding. In this section, we discuss the
larger context in which our approach fits.

\subsection{Flat and Statically-Bounded Generators} \label{sec:flat-and-bounded}

Probabilistic programming languages such as \ld restrict the language they support in order to make the task of exact inference tractable. In contrast, typical generator languages such as \quickcheck are much richer. As a result, not all generators are expressible in \ld. Specifically, \ld only supports first-order (flat) generators that are statically bounded.

However, it turns out that many practical generators naturally tend to be expressible in first-order. All of the generators that are automatically produced from type information by tools like DRaGeN~\cite{dragen} and generic-random~\cite{generic-random} fit this criteria, as do all specification-based generators derived from \rocq inductive relations in QuickChick~\cite{paraskevopoulouComputingCorrectlyInductive2022,leo-good-generators}.
Handwritten generators tend to fit this pattern too: all generators used in the
benchmark suite provided by the \etna{} evaluation tool are flat, including complex and highly-tuned generators for well-typed programs in the simply-typed lambda calculus and System F\textsubscript{<:}.

\subsection{Adaptive Sizing}

Some generator frameworks, such as \quickcheck, have a notion of \textit{sized} generators. These are functions that output a generator given a \texttt{size} parameter. This allows the framework to sample test cases by progressively increasing the size of the generator. This is called \textit{adaptive sizing}. 

Since \ld can only express statically-bounded generators, to use our approach the initial size has to be specified. In \autoref{sec:etna}, we fix the initial size of our generators at 4, 4, and 5 for BST, RBT and STLC workloads respectively and find that generators tuned with our approach can outperform adaptively-sized generators anyway (generators referred to as ``\etna'' in \autoref{fig:timings-table}, \autoref{fig:plots}, and \autoref{fig:stlc-bespoke-etna}).
\ld also supports modeling
adaptive initial sizes as a fixed distribution,
resulting in a 
generator that can be tuned as a proxy for the adaptively-sized version.

\section{Related Work} \label{sec:related}

We report on related work for both generator tuning and probabilistic programming.

\subsection{Tuning Generators}
There are a number of existing approaches to generator tuning for PBT.

\paragraph{Manual tuning}
Perhaps the simplest solution to generator tuning is to provide knobs
for the user to do it
manually.  Indeed, this was part of the motivation for the
original QuickCheck~\cite{claessenQuickCheckLightweightTool2000} and
QuickChick~\cite{paraskevopoulouFoundationalPropertyBasedTesting2015} generator
languages.  This approach is simple, but it has the limitations we
discuss in the introduction: tuning a generator by hand requires significant
experimentation, since it's hard to know the overall distribution that will be
produced by a particular collection of local weights. \ld{} offers an automatic
approach.

Besides language-based approaches, there are auxiliary tools aimed at easing the tuning process. In particular, Tyche~\cite{tyche}
is a visual user interface
that provides insights into a generator's current distribution. This approach is complementary to \ld,
as it could be used to confirm that an objective had the
intended effect, or to visually compare objectives.
In turn, automatic tuning could enhance such interfaces,
e.g., by allowing the user to click-and-drag to adjust
distributions.

\paragraph{Online tuning}
One alternative to manual tuning is tuning during the testing process.  The
Target system~\cite{loscherTargetedPropertybasedTesting2017} uses
hill-climbing and simulated annealing during
generation to maximize an objective.
RLCheck~\cite{reddyQuicklyGeneratingDiverse2020} tunes generators with
reinforcement learning, seeking out diverse and valid inputs. Choice
Gradient Sampling~\cite{goldsteinParsingRandomness2022} tunes online
by manipulating the generator representation itself.
ISLa~\cite{steinhofelInputInvariants2022} uses an SMT solver during generation
to improve the chance of finding valid inputs.

Online techniques allow the generation process to target objectives over time,
but it is hard to predict the impact that they will have on the ultimate
distribution of generated test cases. By contrast, our approach gives direct
control over that final distribution. Furthermore, online approaches usually do
a significant amount of work during generation, leading to relatively slow
sampling speeds.  A recent study by
\citet{goldsteinPropertyBasedTestingPractice2024} suggests that some users of
PBT run their properties in a very tight loop, testing their properties as often
as every time they save their code. In these cases, online tuning, whether that
means running a learning algorithm or calling an SMT solver, may waste precious
time that could be used finding bugs.

\paragraph{Tuning by construction}
Some existing techniques automatically derive PBT generators
from data types or inductive specifications. For example,
DRaGeN~\cite{dragen} computes weights based on insights from the
literature on branching processes, aiming to uniformly distribute data
constructors. DRaGeN is faster than \ld{} at deriving and
tuning generators, but is hard-coded for a particular distributional goal and
does not work for user-defined generators.
Similarly, \citet{leo-good-generators} derive effective
generators from inductive relations in the Rocq theorem prover.
This process produces specification-satisfying generators by construction, but 
requires more up-front effort from the user and does not support 
distributional requirements beyond validity.  

\subsection{Differentiating Discrete PPLs}

Automatic differentiation has been used in probabilistic programming systems before, but only to compute the gradients of likelihood functions in order to guide the probabilistic inference algorithms. At a high level, the key novelty of \ld is that it uses automatic differentiation to compute \textit{exact} gradients of probabilistic inference itself.  

\paragraph{Gradient-Based Inference Algorithms}
Probabilistic inference for arbitrary probabilistic programs is hard. This has led to inference algorithms informed by gradients of the likelihood functions. Notably, Hamiltonian Monte Carlo~\cite{hmc} chooses the next sample using the gradient of the likelihood of the current sample. Variational
inference~\cite{variationalInference, pyro} utilizes gradients of the likelihood to approximate the posterior distribution via a family of closed form distributions. On the other hand, \ld performs discrete probabilistic inference and then computes the exact gradient with respect to the weights. Both HMC and variational inference are limited to continuous probabilistic programs, whereas \ld provides support for discrete programs, which is necessary for the application to PBT for data structures such as lists and trees.

\paragraph{Alternate Formulations of Parameter Tuning} We cast the problem of generator tuning as one of gradient descent relative to an objective function. 
An alternative would be to cast the problem as a form of Bayesian inference, which learns a posterior distribution for a model's parameters given a set of observations about the data generated by the model. In this formulation, the desired objective function would be modeled as a set of observations.
However, this approach would require casting our entropy objective as an observation, but entropy is not an event that can be observed but is rather a property of the entire distribution. In principle, one can cast our target distribution objective as an observable event. But again, this would involve reasoning about the generator distribution in its entirety rather than a single value the generator can produce. Thus, computing the posterior distribution of the generator weights in this setting would be highly intractable.

Another alternate formulation would be to treat our objective functions as likelihoods and attempt to compute the maximum-a-posteriori (MAP) estimation for the parameters in the generator under consideration. Computing the MAP estimation is  an optimization problem that can be computed via gradient-based approaches, and so that would be equivalent to what our approach accomplishes~\cite{pml1Book}.

\paragraph{Gradient Estimators for Discrete Probabilistic Programs}

Recent work such as
ADEV~\cite{adev} and \texttt{StochasticAD.jl}~\cite{stochasticAD} also compute gradients of probabilistic inference but via sampling. They propose 
program transformations 
to produce gradient estimators for probabilistic programs.
These works offer different variance and scaling tradeoffs: while \ld computes exact, zero-variance and zero-bias gradients, they employ Monte Carlo sampling to approximate these gradients.

\paragraph{Learning in Probabilistic Logic Programming}
\ld compiles programs to binary decision diagrams and differentiates through
them to learn weights. Similar functionality is found in
probabilistic logic programming where systems like DeepProblog and
Scallop~\cite{deepproblog, scallop, problogLeastSquares} allow users
to provide first-order logical specifications with weights that can be
learned as outputs of neural networks. \ld differs from these
approaches as it supports more traditional programming constructs; we also provide specific learning objectives and associated algorithms to improve PBT.

\section{Conclusion and Future Work} \label{sec:future}

In this paper, we presented a novel framework for automatically tuning PBT generators. 
We described how different intuitions of users about their generator distributions can be mapped to objective functions. 
We presented a new PPL, \ld, to express generators with support for symbolic weights and parameter learning. 
We also described how automated generator tuning can be made feasible and demonstrated its benfits in enabling PBT generators to find bugs faster.

In the future, we hope to extend \ld to provide support beyond flat and statically-bounded generators. In particular, common PBT frameworks support nested generators that induce distributions over distributions. We hope to reduce them to flat generators via \textit{defunctionalization}. We also hope to provide support for adaptive sizing. 
Finally, we wish to explore how automated generated tuning,
by allowing distributions to be specified declaratively rather than operationally,
can enable more user-friendly interfaces and APIs for property-based testing.

\section{Data-Availability Statement}

The artifact for this paper consists of the implementation of \ld as an embedding in Julia and code to reproduce experiments and plots in Sections \ref{sec:overview}, \ref{sec:objectives}, \ref{sec:training-techniques} and \ref{sec:etna}. It is available on Zenodo~\cite{artifact}. 
\ld is also available as an open-source repository on GitHub at \\\url{https://github.com/Tractables/Alea.jl/tree/loaded-dice}.

\begin{acks}
	
We would like to thank Leonidas Lampropoulos for his support and guidance as well as Steven Holtzen and Zilei Shao for useful technical discussions. This work is
supported in part by the \grantsponsor{nsf}{National Science Foundation}{https://www.nsf.gov/} under grants \grantnum{nsf}{CCF-2220891}, {\em SHF: Medium: Usable Property-Based Testing, NSF \#2402449} and \grantnum{nsf}{IIS1943641}, the Victor Basili Postdoctoral Fellowship at the University of Maryland, DARPA ANSR, CODORD, and SAFRON programs under awards FA8750-23-2-0004, HR00112590089, and HR00112530141, and gifts from Adobe Research, Cisco Research, and Amazon. Approved for public release; distribution is unlimited.
\end{acks}

\bibliographystyle{ACM-Reference-Format}
\ifappendix
\clearpage
\appendix

\section{Automatically Deriving Generators with Dependencies} \label{app:automatic-generator}

In the metaprogram shown in \autoref{fig:derived-generators}, the function \mono{derive_generator} takes as input an inductive type definition, an initial size to bound the size of generated values, and a stack lookback window length, and outputs a generator for that type with additional weights and dependencies.\footnote{We assume this function is used for all user-defined generated types reachable from the type we wish to generate. For example, calling \mono{derive_generator} for \mono{Tree} will derive \mono{genTreeHelper}, which is defined in terms of \mono{genColorHelper}, so we assume that code produced by calling \mono{derive_generator} for \mono{Color} is also available.} To describe how it achieves this, we first describe some library functions and then dive into the structure of the metaprogram.

\subsection{Preliminaries}
To add weights parametrized by the execution context, we use the following library functions.
\begin{itemize}
  \item The \mono{freqDep} combinator, as explained in \autoref{sec:explain-derived}.
  \item For each built-in type \mono{T}, the function \mono{gen<T>Dep}, which is a generator for values of type \mono{T} with weights parameterized by dependencies.
  The function takes a dependency value used to parameterize weights.
	\item For each inductive type \mono{T}, the function \mono{genTerminal<T>}, which is a generator for values of type \mono{T} that only use \mono{T}'s non-recursive constructors.
  It is used in order to handle the zero-size case of a sized generator.
	For example, \mono{genTerminalTree}, the na\"ive generator for red-black trees, would only produce the non-recursive \mono{Leaf} constructor.
  The function takes a dependency value used to parameterize weights.
\end{itemize}

\definecolor{cmtcolor}{rgb}{0.5,0.6,0.5}
\begin{figure}[!htb]
\centering
\newcommand{\q}[1]{\textcolor{purple}{#1}}
\newcommand{\cmt}[1]{\textcolor{cmtcolor}{#1}}
\newcommand{\openq}{"}
\newcommand{\closeq}{"}
\begin{lstlisting}[
         commentstyle=\color{cmtcolor},
    breaklines=true,
    emph={@match,function,for,in,let,match,with,end,type,and,as,fun,of,map,mapi,product,*,|,(,)},
    escapeinside=XX,
    mathescape=false,
    literate={->}{{$\rightarrow$}}2,
    columns=flexible
]
type ty = Inductive of string * ctor list | Builtin of string
and ctor = string * ty list
# let rec tree = Inductive ("Tree", [("Leaf",[]); ("Branch",[tree;color;Builtin "Int";tree])])

let enum ctors = map ctors (fun (name, _) ->X\ Xname ^ X\q{"C"}X) # enum tree = ["LeafC"; "BranchC"]

let derive_generator (Inductive (ty_name, ctors)) initial_size stack_lookback =
 X\q{\openq\textbf{@type}}X $(ty_name ^ X\q{"C"}X) X\q{=}X $(map (enum ctors) (fun c -> X \q{"| "}X ^ c))X\label{line:enum}X
  # @type TreeC = | LeafC | BranchC

  X\tikz[remember picture,overlay]{\coordinate (lsttop);}\q{\textbf{function} gen}X$(ty_name)X\q{Helper(\hlC{size, stack,}\tikz[remember picture,overlay]{\coordinate (hlc1);} \hlB{chosenCtor}\tikz[remember picture,overlay]{\coordinate (hlb1);})}XX\label{line:gen-helper}X
    X\q{\hlC{deps = (size, stack, chosenCtor)}\tikz[remember picture,overlay]{\coordinate (hlc2);}}XX\label{line:deps}X
    X\q{\textbf{@match} size (}X
      X\q{0 $\to$ genTerminal}X$(name)X\q{(deps),}X X\label{line:gen-t-dep}X
      X\q{S(size') $\to$}X
        X\q{\hlB{\textbf{@match} chosenCtor}\tikz[remember picture,overlay]{\coordinate (hlb2);} (}X
          $(map ctors (fun (name, args) ->
            X\q{"}X$(name ^ X\q{"C"}X) X\q{$\to$ \textbf{begin}}X $( # BranchC X\cmt{$\to$}X begin
              let each_arg_choices =
                map args (fun arg ->
                  match arg with
                  | Inductive (_, arg_ctors) ->X\ Xenum arg_ctors
                  | Builtin _ ->X\ X[X\q{\openq()\closeq}X])
              in
              X\q{\hlB{argChoices}\tikz[remember picture,overlay]{\coordinate (hlb3);} = freqDep(\hlC{deps}\tikz[remember picture,overlay]{\coordinate (hlc3);},}X $(product each_arg_choices)X\q{)}XX\label{line:freq-dep}X
              # argChoices = freqDep(deps,X\cmt{$\{\texttt{LeafC},\texttt{BranchC}\}\times\{\texttt{RedC},\texttt{BlackC}\}\times\{()\}\times\{\texttt{LeafC},\texttt{BranchC}\}$}X)X\phantom{y}X
              $(name) X\q{(}X$( # Branch (
                mapi args (fun i arg ->
                  match arg with
                  | Builtin name ->
                    X\q{\openq gen}X$(name)X\q{Dep(\hlC{deps}\tikz[remember picture,overlay]{\coordinate (hlc4);}),\closeq}XX\label{line:gen-int-dep}X # genIntDep(deps),
                  | Inductive { name; _ } -> 
                    X\q{\openq gen}X$(name)X\q{Helper(size', firstN(}X$(stack_lookback)X\q{, cons(}X$(loc())X\q{,stack)),}XX\label{line:update-stack}X
                                        X\q{argChoices[}X$(i)X\q{]),\closeq}X
                    # genTreeHelper(size', firstN(2, cons(11,stack)), argChoices[3]),
                ))X\q{)}X
            X\q{\textbf{end}\closeq}X))X\q{))}X
  X\q{\textbf{end}}X

  X\q{\textbf{function} gen}X$(ty_name)X\q{()}XX\label{line:gen-t}X
    X\q{rootCtor = freqDep((), }X$(enum ctors)X\q{)}X
    X\q{gen}X$(ty_name)X\q{Helper(rootCtor, }X$(initial_size)X\q{, [])}X
  X\q{\textbf{end}\closeq}X
\end{lstlisting}

\begin{tikzpicture}[remember picture,overlay]
  \node[
    rectangle,
    draw=gray,
    fill=white,
    font=\small,
    inner sep=5pt,
    text height=1.5ex,
    text depth=0.25ex,
    anchor=north west,
  ] (textbox1) at ([xshift=2.347in,yshift=-1.08in]lsttop) {
    {\color[rgb]{\hlCtextcolor} $\bullet$ Dependencies are added to parameterize weights.}
  };

  \node[
    rectangle,
    draw=gray,
    fill=white,
    font=\small,
    inner sep=5pt,
    text height=1.5ex,
    text depth=0.25ex,
    anchor=north west
  ] (textbox2) at ([xshift=2.8in,yshift=-0.34in]lsttop) {
    {\color[rgb]{\hlBtextcolor} $\bullet$ The choice of constructor is frontloaded.}
  };
  \coordinate (texthlc) at ([xshift=7pt,yshift=-0.5em-4pt]textbox1.north west);
  \coordinate (texthlb) at ([xshift=7pt,yshift=-0.5em-4pt]textbox2.north west);

  \draw[dotted, thick, color={\hlCcolor}, -{Latex[length=1.5mm]}] (texthlc) -- ($(hlc1)+(0pt,-3pt)$);
  \draw[dotted, thick, color={\hlCcolor}, -{Latex[length=1.5mm]}] (texthlc) -- ($(hlc2)+(0pt,-2.5pt)$);
  \draw[dotted, thick, color={\hlCcolor}, -{Latex[length=1.5mm]}] (texthlc) -- ($(hlc3)+(0pt,5pt)$);
  \draw[dotted, thick, color={\hlCcolor}, -{Latex[length=1.5mm]}] (texthlc) -- ($(hlc4)+(0pt,5pt)$);
  
  \draw[dotted, thick, color={\hlBcolor}, -{Latex[length=1.5mm]}] (texthlb) -- ($(hlb1)+(0pt,-3pt)$);
  \draw[dotted, thick, color={\hlBcolor}, -{Latex[length=1.5mm]}] (texthlb) -- ($(hlb2)+(0pt,6pt)$);
  \draw[dotted, thick, color={\hlBcolor}, -{Latex[length=1.5mm]}] (texthlb) -- ($(hlb3)+(0pt,6pt)$);
\end{tikzpicture}

\caption{Pseudocode for a metaprogram~\cite{bawden99quasiquotation} that derives \ld generators with dependencies. The metaprogramming language (OCaml-like) appears in black, while the generated target language (the embedding of \ld in Julia) appears in red.
Quotes are used to produce the target language, while \mono{\$()} splices in an expression from the meta language, which produces either a string or a list of strings that is implicitly concatenated.
Comments show example instances of generated code for red-black trees.
The metaprogram derives generators for recursive inductive types, and would be used by calling \mono{derive_generator} for all user-defined inductive types.
}
\label{fig:derived-generators}
\end{figure}

\subsection{Execution Context as a Dependency}
All of the above functions take a value upon which to split weights; we bind this value to \mono{deps} on \autoref{line:deps}. One of these dependencies is \mono{size}, which is already present in sized generators. Another is \mono{stack}, which is a list of program locations of length at most \mono{stack_lookback}.
We update the stack on \autoref{line:update-stack}, by prepending a unique integer corresponding to the source location (generated by \mono{loc()}, which simply returns the next unused integer from a global counter), then truncating the list to be no longer than \mono{stack_lookback}. The final dependency is \mono{chosenCtor}, which is the frontloaded choice of the constructor to be used at the root.

\subsection{Metaprogram Structure} Using the library functions defined above, the function \mono{derive_generator} produces the type-derived generator with dependencies. It achieves this by creating \mono{gen<T>Helper} (\autoref{line:gen-helper}): a sized \mono{T} generator (as in \autoref{fig:gen-tree-flips}), modified to split weights by dependencies (as in \autoref{fig:gen-tree-flips-split-by-size}), pass additional execution context (as in \autoref{fig:gen-tree-flips-parent-color}), and frontload choices (as in \autoref{fig:gen-tree-flips-is-leaf}).

Concretely, the additional execution context consists of \mono{stack} and \mono{chosenCtor}. The \mono{stack} is the suffix of the call stack (particularly \mono{parentColor} in \autoref{fig:gen-tree-flips-parent-color}). The \mono{chosenCtor} is the frontloaded choice of the constructor to be used at the root.
Compared to \autoref{fig:gen-tree-flips-is-leaf}, we generalize the boolean \mono{leaf} to an enum corresponding to the constructors of the type, to support inductive types with more than two constructors. This enum is generated on \autoref{line:enum}.
To handle the 0 case of the sized generator, which should surely terminate, we use \mono{genTerminal<T>}.
Finally, \mono{gen<T>} wraps \mono{gen<T>Helper} by choosing the root
constructor, and passing in an empty stack and an initial size (\autoref{line:gen-t}).

\FloatBarrier

\section{STLC Bespoke Generator} \label{sec:stlc-bespoke}
We include the ``bespoke'' generator for simply-typed lambda calculus terms tuned in \autoref{fig:height-tunings} and \autoref{sec:etna}.
It corresponds to the \quickchick generator from \etna~\cite{etna}, which is approximately 60 lines of \rocq, translated to \ld and modified
with added weights that depend on size and fixed initial sizes.

\begin{lstlisting}
@type Typ = TBool() | TFun(Typ, Typ)
@type Expr = Var(Nat) | Bool(Boolean) | App(Expr, Expr) | Abs(Typ, Expr)

genVarX$'$X(ctx, t, p, r) =
  @match ctx (
  	Nil() X$\to$X r,
  	Cons(tX$'$X, ctxX$'$X) X$\to$X
      @dice if t == tX$'$X
        genVarX$'$X(ctxX$'$X, t, p + 1, Cons(p, r))
      else
        genVarX$'$X(ctxX$'$X, t, p + 1, r)
      end)
 
genZero(env, tau) =
  @match tau (
    TBool() X$\to$X Some(Bool(genBoolean())),
    TFun(t1,t2) X$\to$X
      bindOpt(
        genZero(Cons(t1, env), t2),
        e -> Some(Abs(t1, e))))

genTyp(s) =
  @match s (
    0 X$\to$X TBool(),
    S(sX$'$X) X$\to$X (
      wX$_\mono{bool}$X, wX$_\mono{fun}$X =
        @match s (
          1 X$\to$X (X$\theta_\mono{bool1}$X, X$\theta_\mono{bool1}$X),
          2 X$\to$X (X$\theta_\mono{bool2}$X, X$\theta_\mono{fun2}$X));
      freq [
        wX$_\mono{bool}$X X$\Rightarrow$X TBool(),
        wX$_\mono{fun}$X X$\Rightarrow$X (
          t1 = genTyp(sX$'$X);
          t2 = genTyp(sX$'$X);
          TFun(t1, t2)
      ]))

genExpr(env, tau, size) =
  @match size (
    0 X$\to$X (
      backtrack [
        X$\theta_\mono{0var}$X X$\Rightarrow$X oneOf([
          None(),
          map(
            x X$\to$X Some(Var(x)),
            genVarX$'$X(env, tau, 0, Nil())
          )
        ]),
        X$\theta_\mono{0zero}$X X$\Rightarrow$X genZero(env, tau)
      ],),
    S(n) X$\to$X (
      wX$_\mono{var}$X, wX$_\mono{app}$X, wX$_\mono{val}$X = @match size (1 X$\to$X (X$\theta_\mono{var1}$X,X$\theta_\mono{app1}$X,X$\theta_\mono{val1}$X), ..., 5 X$\to$X (X$\theta_\mono{var5}$X,X$\theta_\mono{app5}$X,X$\theta_\mono{val5}$X));
      backtrack [
        (wX$_\mono{var}$X,
          oneOf(
            None(),
            map(
              x X$\to$X Some(Var(x)), 
              genVarX$'$X(env, tau, 0, Nil())))),
        (wX$_\mono{app}$X, (
          argty = genTyp(2);
          bindOption(genExpr(env, TFun(argty, tau), n),
            e1 X$\to$X
              bindOption(genExpr(env,argty,n),
              e2 X$\to$X
                Some(App(e1,e2)))))),
        (wX$_\mono{val}$X,
          @match tau (
            TBool() X$\to$X Some(Bool(genBool())),
            TFun(t1, t2) X$\to$X
              bindOption(genExpr(cons(t1,env),t2,n),(e X$\to$X
                Some(Abs(t1,e))))))
      ]))

G = genExpr(Nil(), genTyp(2), 5)
\end{lstlisting}

\section{Adapting REINFORCE for Entropy Gradient Estimation}\label{appendix:gradient-estimation}

\begin{proposition} The gradient of the entropy of $p_{G,\theta}$, which can be expressed as the expectation of $\log{p_{G, \theta}(x)}$, can be estimated as follows:
	\[
		\nabla_{\theta} \mathop{\mathbb{E}}_{x \sim p_{G, \theta}(.)} [\log{p_{G, \theta}(x)}] = \mathop{\mathbb{E}}_{x \sim p_{G, \theta}(.)} [\log{p_{G, \theta}(x)} \nabla_{\theta} \log{p_{G, \theta}(x)}]
	\]
\end{proposition}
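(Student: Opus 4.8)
The plan is the standard score-function (REINFORCE) derivation, expanding the expectation as a finite sum and differentiating termwise. Since \ld generators are flat and statically bounded (\autoref{sec:flat-and-bounded}), the set $X$ of test cases reachable for a fixed generator structure is finite, so
\[
\mathop{\mathbb{E}}_{x \sim p_{G, \theta}(.)} [\log{p_{G, \theta}(x)}] = \sum_{x \in X} p_{G, \theta}(x) \log p_{G, \theta}(x),
\]
and each $p_{G,\theta}(x)$ is a polynomial in the symbolic weights (this is exactly what weighted model counting computes, cf.\ \autoref{fig:wmc-program}), hence differentiable in $\theta$ wherever $p_{G,\theta}(x) > 0$. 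Restricting the sum to the support of $p_{G,\theta}$ and adopting the usual convention $0\log 0 = 0$, the operator $\nabla_\theta$ therefore commutes with this finite sum.

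First I would apply the product rule to each summand,
\[
\nabla_\theta\bigl(p_{G,\theta}(x)\log p_{G,\theta}(x)\bigr) = \bigl(\nabla_\theta p_{G,\theta}(x)\bigr)\log p_{G,\theta}(x) + p_{G,\theta}(x)\,\nabla_\theta\log p_{G,\theta}(x),
\]
and then use the log-derivative identity $\nabla_\theta p_{G,\theta}(x) = p_{G,\theta}(x)\,\nabla_\theta\log p_{G,\theta}(x)$, valid on the support, to rewrite the first term as $p_{G,\theta}(x)\log p_{G,\theta}(x)\,\nabla_\theta\log p_{G,\theta}(x)$. Summing this over $x$ yields precisely the claimed right-hand side $\mathbb{E}_{x\sim p_{G,\theta}(.)}[\log p_{G,\theta}(x)\,\nabla_\theta\log p_{G,\theta}(x)]$.

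It then remains to show the second term contributes nothing. Summing $p_{G,\theta}(x)\,\nabla_\theta\log p_{G,\theta}(x)$ over the support and applying the log-derivative identity in reverse gives $\sum_{x} \nabla_\theta p_{G,\theta}(x) = \nabla_\theta \sum_{x} p_{G,\theta}(x) = \nabla_\theta 1 = 0$, since $p_{G,\theta}$ is a probability distribution for every choice of weights. Adding the two pieces gives the proposition.

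The main obstacle is not depth but care around the boundary of the support: at test cases $x$ with $p_{G,\theta}(x) = 0$ both $\log p_{G,\theta}(x)$ and $\nabla_\theta\log p_{G,\theta}(x)$ are undefined, so I must justify dropping those terms from both sides — either by assuming the weight assignments encountered during training keep the relevant probabilities strictly positive, or by arguing that such terms contribute zero to the entropy and to its gradient under the standard continuity convention. Everything else (finiteness of $X$, polynomiality and hence differentiability of each $p_{G,\theta}(x)$, and the interchange of $\nabla_\theta$ with a finite sum) follows immediately from the structure of \ld established earlier in the paper.
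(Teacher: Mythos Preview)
Your proposal is correct and follows essentially the same route as the paper's proof: expand the expectation as a finite sum, push $\nabla_\theta$ through, apply the product rule, use the log-derivative identity $\nabla_\theta p = p\,\nabla_\theta\log p$, and eliminate the extra term via $\sum_x \nabla_\theta p_{G,\theta}(x) = \nabla_\theta 1 = 0$. Your treatment is in fact more careful than the paper's, which does not discuss finiteness of $X$, polynomiality of $p_{G,\theta}(x)$, or the support boundary.
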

\begin{proof}
\begin{flalign*}
	\nabla_{\theta} \mathop{\mathbb{E}}_{x \sim p_{G, \theta}(.)} [\log{p_{G, \theta}(x)}] &=  \nabla_{\theta}  \sum_{x \in X} p_{G, \theta}(x) \cdot \log{p_{G, \theta}(x)} 
	\\
	&= \sum_{x \in X} \nabla_{\theta} p_{G, \theta}(x) \cdot \log{p_{G, \theta}(x)} \tag{Leibnitz Integral Rule}
	\\
	&= \sum_{x \in X} p_{G, \theta}(x) \cdot \nabla_{\theta} \log{p_{G, \theta}(x)} + \log{p_{G, \theta}(x)} \cdot \nabla_{\theta} p_{G, \theta}(x) \tag{Product Rule of Differentiation}
	\\
	&= \sum_{x \in X} \nabla_{\theta}p_{G, \theta}(x) + \log{p_{G, \theta}(x)} \cdot p_{G, \theta}(x) \nabla_{\theta} \log{p_{G, \theta}(x)} \tag{\(\nabla_{\theta} \log{p_{G, \theta}(x)} = \frac{\nabla_{\theta}p_{G, \theta}(x)}{p_{G, \theta}(x)}\)}
	\\
	&= \sum_{x \in X} \log{p_{G, \theta}(x)} \cdot p_{G, \theta}(x) \nabla_{\theta} \log{p_{G, \theta}(x)} \tag{\(\sum_{x \in X} \nabla_{\theta} p_{G, \theta}(x) = \nabla_{\theta} \sum_{x \in X} p_{G, \theta}(x) = \nabla_{\theta} 1 = 0\)}
	\\
	&= \mathop{\mathbb{E}}_{x \sim p_{G, \theta}(.)} [\log{p_{G, \theta}(x)} \nabla_{\theta} \log{p_{G, \theta}(x)}]
\end{flalign*}
\end{proof}

\section{Internal Evaluation: Training Costs}\label{sec:training-costs}

To provide further detail into the training costs for our tuned generators,
\autoref{fig:eval-num-params-and-time} shows the number of parameters and training times for each tuned generator.

\begin{table}[htb]
    \caption{
    Number of parameters and training time for each tuning in the evaluation.
    We believe that STLC Bespoke has the longest training time despite having
the fewest weights due to its complex backtracking control flow.
    }\label{fig:eval-num-params-and-time}
    \begin{tabular}{@{}lrr@{}}
      \toprule
 Generator \& Workload & \# of Params. & Training Time \\ \midrule
 BST Type-Based       & 100 & 3m   \\
 RBT Type-Based       & 132 & 3m   \\
 STLC Type-Based      & 796 & 7m   \\ \bespokemidrule
 STLC Bespoke         & 23 & 8m \\
 \bottomrule
    \end{tabular}
\end{table}

\section{Validity Rates and Overhead of Rejection Sampling for \SpecEntropy}\label{sec:overhead-of-rejection-sampling}
For the type-based generators tuned in \autoref{sec:etna}, 
\autoref{fig:eval-pct-valid} shows the percentage of samples that are valid before and after training. 
While we previously showed that tuning for specification entropy increases the \textit{diversity} of valid samples (\autoref{fig:gen-speed-and-valid}), we now see that it also increases the absolute validity rate.
  
Tangentially, we note that our computation of \specentropy performs rejection sampling, as only valid samples affect the gradient, and thus has computational performance proportional to the validity rate.
In our workloads, this has limited impact, as the lowest untuned validity rate is 35\%.
In cases of sparse validity conditions, one may employ more sophisticated sampling algorithms such as Markov Chain Monte Carlo or Sequential Monte Carlo~\cite{monte-carlo-book} to sample more effectively, which we leave for future work.

\begin{table}[htb]
    \caption{Percentage of valid samples out of 100,000 generations.}\label{fig:eval-pct-valid}
    \begin{tabular}{@{}l>{\raggedleft\arraybackslash}p{1.3cm}>{\raggedleft\arraybackslash}p{1.3cm}>{\raggedleft\arraybackslash}p{1.3cm}@{}}
      \toprule
 & \multicolumn{3}{r}{\hspace{0.3em}Percentage of Valid Samples}\\
 Generator \& Workload & \etna & Initial & Tuned \\ \midrule
 BST Type-Based       & 76\%  & 66\%     & 71\%   \\
 RBT Type-Based       & 74\%  & 63\%     & 79\%   \\
 STLC Type-Based      & 40\%  & 35\%     & 65\%   \\ \bespokemidrule
 STLC Bespoke         & 100\% & 100\%    & 100\%  \\
 \bottomrule
    \end{tabular}
\end{table}

\clearpage

\bibliography{references}
\else

\bibliography{references}
\fi

\end{document}